\documentclass[12pt]{article}
\usepackage{amsmath, amssymb, amsthm, mathrsfs}
\usepackage{enumerate}
\usepackage{color}



\newtheorem{claim}{Claim}[section]
\newtheorem{theorem}[claim]{Theorem}
\newtheorem{lemma}[claim]{Lemma}
\newtheorem{remark}[claim]{Remark}

\newtheorem{example}[claim]{Example}

\usepackage{cite}

\begin{document}
\begin{center}
{\Large{\textbf{SEMICLASSICAL BOUNDS \\[.3em] IN MAGNETIC BOTTLES}}}

\bigskip

{\large{Diana Barseghyan$^{a,b}$, Pavel Exner$^{a,c}$,  Hynek Kova\v{r}\'{\i}k$^{d}$, \\[.3em] Timo Weidl$^{e}$}}

\end{center}

\begin{quote}

{\small a) Department of Theoretical Physics, Nuclear Physics Institute ASCR, \\ \phantom{c)} 25068 \v{R}e\v{z} near Prague, Czech Republic \\[.3em]
 b) Department of Mathematics, Faculty of Science, University \\ \phantom{c)} of Ostrava, 30.~dubna 22, 70103 Ostrava, Czech Republic \\[.3em]
c) Doppler Institute for Mathematical Physics and Applied \\ \phantom{c)} Mathematics, Czech Technical University, B\v{r}ehov\'{a} 7, 11519 Prague \\[.3em]
d) Dicatam, Sezione di Matematica, Universit\`a degli Studi \\ \phantom{c)} di Brescia, via Branze~38, 25123 Brescia, Italy \\[.3em]
e) Fakult\"at f\"ur Mathematik und Physik Institut f\"ur Analysis, \\ \phantom{c)} Dynamik und Modellierung, Universit\"at Stuttgart, Pfaffenwaldring~57, \\ \phantom{c)} 70569 Stuttgart, Germany \\[.5em]
\phantom{c)} \emph{dianabar@ujf.cas.cz, diana.barseghyan@osu.cz, exner@ujf.cas.cz, \\ \phantom{c)}hynek.kovarik@unibs.it, weidl@mathematik.uni-stuttgart.de}}

\end{quote}

\bigskip

\noindent \textbf{Abstract.} The aim of the paper is to derive spectral estimates into several classes of magnetic systems. They include three-dimensional regions with Dirichlet boundary as well as a particle in $\mathbb{R}^3$ confined by a local change of the magnetic field. We
establish two-dimensional Berezin-Li-Yau and Lieb-Thirring-type bounds in the presence of magnetic fields and, using them, get three-dimensional estimates for the eigenvalue moments of the corresponding magnetic Laplacians.

\section{Introduction} \label{s: intro}

\setcounter{equation}{0}
Let $-\Delta_\Omega$ be the Dirichlet Laplacian corresponding to an open bounded domain $\Omega\subset\mathbb{R}^d$, defined in the quadratic form sense on $\mathcal{H}^1_0(\Omega)$. The operator is obviously non-negative and since the embedding $\mathcal{H}_0^1\hookrightarrow L^2(\Omega)$ is compact, its spectrum is purely discrete accumulating at infinity only. It is well known that for $d=3$, up to a choice of the scale, the eigenvalues describe energies of a spinless quantum particle confined to such a hard-wall `bottle'.

Motivated by this physical problem, we consider in the present work a magnetic version of the mentioned Dirichlet Laplacian, that is, the operator $\mathcal{H}_\Omega(A)=(i\nabla+A(x))^2$ associated with the closed quadratic form
$$
\|(i\nabla+A)u\|_{L^2(\Omega)}^2\,,\quad u\in\mathcal{H}_0^1(\Omega)\,,
$$
where the real-valued and sufficiently smooth function $A$ is a vector potential. The magnetic Sobolev norm on the bounded domain $\Omega$ is equivalent to the non-magnetic one and the operator $\mathcal{H}_\Omega(A)$ has a purely discrete spectrum as well. We shall denote the eigenvalues by $\lambda_k =\lambda_k(\Omega, A)$, assuming that they repeat according to their multiplicities.

One of the objects of our interest in this paper will be bounds of the eigenvalue moments of such operators. For starters, recall that for non-magnetic Dirichlet Laplacians
the following bound was proved in the work of Berezin, Li and Yau  \cite{Be72a, Be72b, LY83},
\begin{equation}
\label{Berezin bound}\sum_k(\Lambda-\lambda_k(\Omega,0))_+^\sigma\le L_{\sigma,d}^{\mathrm{cl}}\,|\Omega|\,
\Lambda^{\sigma+\frac{d}{2}} \quad\text{for any}\;\;\sigma\ge1 \;\;\text{and}\;\;\Lambda>0\,,
\end{equation}
where $|\Omega|$ is the volume of $\Omega$ and the constant on the right-hand side,
$$
L_{\sigma,d}^{\mathrm{cl}}=\frac{\Gamma(\sigma+1)}{(4\pi)^{\frac{d}{2}}\Gamma
(\sigma+1+d/2)}\,,
$$
is optimal. Furthermore, the bound (\ref{Berezin bound}) holds true for $0\le\sigma<1$ as well, but with another, probably non-sharp constant on the right-hand side,
\begin{equation}\label{Laptev ineq.}
\sum_k(\Lambda-\lambda_k(\Omega,0))_+^\sigma\le
2\left(\frac{\sigma}{\sigma+1}\right)^\sigma L_{\sigma,d}^{\mathrm{cl}}\,|\Omega|\,
\Lambda^{\sigma+\frac{d}{2}}\,,\quad 0\le\sigma<1\,.
\end{equation}
see \cite{La97}. In the particular case $\sigma=1$ the inequality (\ref{Berezin bound}) is equivalent, via Legendre transformation, to the lower bound
\begin{equation}\label{Leg.trans.}
\sum_{j=1}^N\lambda_j(\Omega,0)\ge C_d|\Omega|^{-\frac{2}{d}}N^{1+\frac{2}{d}}\,,\quad C_d=\frac{4\pi d}{d+2}\Gamma(d/2+1)^{\frac{2}{d}}\,.
\end{equation}

Turning next to the magnetic case, we note first that the pointwise diamagnetic inequality \cite{LL01}, namely
$$
|\nabla|u(x)||\le|(i\nabla+A)u(x)|\quad\text{for a.a.}\;\; x\in\Omega\,,
$$
implies $\lambda_1(\Omega, A)\ge\lambda_1(\Omega,0)$, however, the estimate $\lambda_j(\Omega, A)\ge\lambda_j(\Omega,0)$ fails in general if $j\ge2$. Nevertheless, momentum estimates are still valid for some values of the parameters. In particular, it was shown \cite{LW00} that the sharp bound (\ref{Berezin bound}) holds true for arbitrary magnetic fields provided $\sigma\ge\frac{3}{2}$, and the same sharp bound holds true for constant magnetic fields if $\sigma\ge1$, see\cite{ELV00}. Furthermore, in the dimension $d=2$ the bound (\ref{Laptev ineq.}) holds true for constant magnetic fields if $0\le\sigma<1$ and the constant on its right-hand side cannot be improved \cite{FLW09}.

Our main aim in the present work is to derive sufficiently precise two-dimensional Berezin-type estimates for quantum systems exposed to a magnetic field and to apply them to the three-dimensional case. We are going to address two questions, one concerning eigenvalue moments estimates for magnetic Laplacians on three dimensional domains having a bounded cross section in a fixed direction, and the other about similar estimates for magnetic Laplacians defined on whole $\mathbb{R}^3$.

Let us review the paper content in more details. In Sec.~\ref{s: reduction} we will describe the dimensional-reduction technique \cite{LW00} which allows us to derive the sought spectral estimates for three-dimensional magnetic `bottles' using two-dimensional ones. Our next aim is to derive a two-dimensional version of the Li-Yau inequality (\ref{Leg.trans.}) in presence of a constant magnetic field giving rise to an extra term on the right-hand side. The result will be stated and proved in first part of Sec.~\ref{s: Berezin-Li-Yau}. This in turn will imply, by means of Legendre transformation, a magnetic version of the Berezin inequality which we are going to present in second part of Sec.~\ref{s: Berezin-Li-Yau}. It has to be added that the question of semiclassical spectral bounds for such systems has been addressed before, in particular, another version of the magnetic Berezin inequality was derived by two of us \cite{KW13}. In final part of Sec.~\ref{s: Berezin-Li-Yau} we are going to compare the two results and show that the one derived here becomes substantially better when the magnetic field is strong.

In some cases the eigenvalues of the magnetic Dirichlet Laplacian with a constant magnetic field can be computed exactly in terms of suitable special functions.  In the first part of Sec.~\ref{s: disc} we are present such an example considering the magnetic  Dirichlet Laplacian on a two-dimensional disc with a constant magnetic field. Its eigenvalues will be expressed in terms of Kummer function zeros. Next, in the second part, we are going to consider again the magnetic Dirichlet  Laplacian on a two-dimensional disc, now in a more general situation when the magnetic field is no longer homogeneous but retains the radial symmetry; we will derive the Berezin inequality for the eigenvalue moments. In Sec.~\ref{3Dapplication} we shall return to our original motivation and use the mentioned reduction technique to derive Berezin-type spectral estimates for a class of three-dimensional magnetic `bottles' characterized by a bounded cross section in the $x_3$ direction.

Turning to the second one of the indicated questions, from Sec.~\ref{s:mgBerezin} on, we shall be concerned with magnetic Laplacians in $L^2(\mathbb{R}^3)$ associated with the magnetic field $B:\mathbb{R}^3\to\mathbb{R}^3$ which is as a local perturbation of a constant magnetic field of intensity $B_0>0$.  Again, as before, we first derive suitable two-dimensional estimates; this will be done in Sec.~\ref{s:mgBerezin}. In the last two sections we apply this result to the three-dimensional case. In Sec.~\ref{s:3Dhole} we show that the essential spectrum of the magnetic Laplacian with corresponding perturbed magnetic field coincides with $[B_0, \infty)$. The Sec.~\ref{LT-3D} we then prove Lieb-Thirring-type inequalities for the moments of eigenvalues below the threshold of the essential spectrum for several types of magnetic `holes'.

\section{Dimensional reduction}
\label{s: reduction}
\setcounter{equation}{0}

As indicated our question concerns estimating eigenvalues due to confinement in a three-dimensional `bottle' by using two-dimensional Berezin type estimates. In such situation one can use the dimension-reduction technique \cite{LW00}. In particular, let $-\Delta_\Omega$ be the Dirichlet Laplacian on an open domain $\Omega\subseteq\mathbb{R}^3$, then for any $\sigma\ge\frac{3}{2}$ the inequality
\begin{equation}\label{Dir.Laplacian}
\mathrm{tr}\left(\Lambda-(-\Delta_\Omega)\right)_+^\sigma\le L_{1,\sigma}
^{\mathrm{cl}}\int_{\mathbb{R}}\mathrm{tr}\left(\Lambda-(-\Delta_{\omega(x_3)})\right)_+^{\sigma+
\frac{1}{2}}\,\mathrm{d}x_3
\end{equation}
is valid, where $-\Delta_{\omega(x_3)}$ is the Dirichlet Laplacian on the section
$$
\omega(x_3)=\left\{x'=(x_1, x_2)\in\mathbb{R}^2|\,\,x=(x', x_3)=(x_1, x_2, x_3)\in\Omega
\right\},
$$
see \cite{LW00}, and also \cite{ELW04, Wei08}. The integral at the right-hand side of (\ref{Dir.Laplacian}), in fact restricted to those $x_3$ for which $\inf\,\mathrm{spec}(-\Delta_{\omega(x_3)})<\Lambda$, yields the classical phase space volume. Note that in this way one can obtain estimates also in some unbounded domains \cite{GW11} as well as remainder terms \cite{Wei08}.

A similar technique can be used also in the magnetic case. To describe it, consider a sufficiently smooth magnetic vector potential $A(\cdot):\Omega\to \mathbb{R}^3$ generating the magnetic field
$$
B(x)=(B_1(x),B_2(x),B_3(x))=\mathrm{rot}\,A(x)\,.
$$
For the sake of definiteness, the shall use the gauge with $A_3(x)=0$. Furthermore, we consider the magnetic Dirichlet Laplacians
$$
\mathcal{H}_\Omega(A)=(i\nabla_x-A(x))^2\quad\text{on}\;\, L^2(\Omega)
$$
and
$$
\widetilde{H}_{\omega(x_3)}(\widetilde{A})=(i\nabla_{x'}-\widetilde{A}(x))^2\quad
\text{on}\;\, L^2(\omega(x_3))\,,
$$
where $\widetilde{A}(x):=(A_1(x), A_2(x))$. Note that for the fixed $x_3$ the two-dimensional vector potential $\widetilde{A}(x', x_3)$ corresponds to the magnetic field
$$
\tilde{B}(x', x_3)=B_3(x)=\frac{\partial A_2}{\partial x_1}-\frac{\partial A_1}
{\partial x_2}\,.
$$
Referring to\cite[Sec.~3.2]{LW00} one can then claim that for a $\sigma\ge\frac{3}{2}$ we have
\begin{equation}
\label{magn.field}\mathrm{tr}(\Lambda-\mathcal{H}_\Omega(A))_+^\sigma\le L_{1,\sigma}^{\mathrm{cl}}
\int_{\mathbb{R}}\mathrm{tr}(\Lambda-\widetilde{H}_{\omega(x_3)}(\widetilde{A}))_+^
{\sigma+1/2}\,\mathrm{d}x_3\,.
\end{equation}

\section{Berezin-Li-Yau inequality with a constant magnetic field}
\label{s: Berezin-Li-Yau}
\setcounter{equation}{0}

Suppose that the motion is confined to a planar domain $\omega$ being exposed to influence of a constant magnetic field of intensity $B_0$ perpendicular to the plane, and let $A:\: \mathbb{R}^2\to \mathbb{R}^2$ be a vector potential generating this field. We denote by $H_\omega(A)$ the corresponding magnetic Dirichlet Laplacian on $\omega$ and $\mu_j(A)$ will be its eigenvalues arranged in the ascending with repetition according to their multiplicity. Our first aim is to extend the Li-Yau inequality (\ref{Leg.trans.}) to this situation with an additional term on the right-hand side depending on $B_0$ only. This will be then used to derive the Berezin-type inequality. Conventionally we denote by $\mathbb{N}$ the set of natural numbers, while the set of integers will be denoted by $\mathbb{Z}$.
\medskip

\noindent The following result is not new. Indeed, it can be recovered from \cite[Sec.~2]{ELV00}, however, for the sake of completeness we include a proof.

\subsection{Li-Yau estimate}
\setcounter{equation}{0}
\begin{theorem}
Assume that $\omega\subset\mathbb{R}^2$ is open and finite. Then the inequality
\begin{equation}\label{eigenvalue}
\sum_{j\le N}\mu_j(A)\ge\frac{2\pi N^2}{|\omega|}+
\frac{B_0^2}{2\pi}|\omega| m(1-m)
\end{equation}
holds, where $m:=\left\{\frac{2\pi N}{B_0|\omega|}\right\}$ is the fractional part of $\frac{2\pi N}{B_0|\omega|}$.
\end{theorem}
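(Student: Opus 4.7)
The plan is to reduce the problem to a constrained minimization over the spectral decomposition of the Landau Hamiltonian on the whole plane. Let $\{\psi_j\}_{j=1}^N$ be the first $N$ orthonormal eigenfunctions of $H_\omega(A)$. Extending each $\psi_j$ by zero gives an orthonormal system in $L^2(\mathbb{R}^2)$ lying in the magnetic Sobolev space, and the magnetic form values are unchanged since the integrands vanish off $\omega$. Denoting by $H_0 = (i\nabla+A)^2$ the Landau Hamiltonian on $L^2(\mathbb{R}^2)$ (with the same $A$) and by $\Pi_k$ the spectral projection onto the $k$-th Landau level $(2k+1)B_0$, this yields
\begin{equation*}
\sum_{j=1}^N \mu_j(A) \;=\; \sum_{j=1}^N \langle \psi_j, H_0\psi_j\rangle \;=\; \sum_{k=0}^{\infty} (2k+1)B_0\, d_k, \qquad d_k := \sum_{j=1}^N \|\Pi_k\psi_j\|^2.
\end{equation*}

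Next I would collect the two constraints on the sequence $\{d_k\}$. Orthonormality of $\{\psi_j\}$ and completeness of $\sum_k \Pi_k = I$ give $\sum_k d_k = N$. For the upper bound, let $P$ be the orthogonal projection onto $\mathrm{span}\{\psi_1,\dots,\psi_N\}$; since every $\psi_j$ is supported in $\omega$, one has $P = \chi_\omega P\chi_\omega \le \chi_\omega$, so
\begin{equation*}
d_k \;=\; \mathrm{tr}(\Pi_k P \Pi_k) \;\le\; \mathrm{tr}(\Pi_k \chi_\omega \Pi_k) \;=\; \int_\omega \Pi_k(x,x)\,dx \;=\; \frac{B_0}{2\pi}\,|\omega|,
\end{equation*}
using the gauge-invariant fact that the diagonal of the integral kernel of each Landau projector is the constant $B_0/(2\pi)$.

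With these constraints in hand, the right-hand side becomes a linear functional of $\{d_k\}$ with strictly increasing coefficients $(2k+1)B_0$, so its minimum under $0 \le d_k \le M := B_0|\omega|/(2\pi)$ and $\sum_k d_k = N$ is achieved by the greedy ``Landau-level filling'' profile: $d_k = M$ for $k=0,\dots,K-1$, $d_K = N-KM = Mm$, and $d_k=0$ for $k>K$, where $K = \lfloor N/M\rfloor$ and $m = \{N/M\} = \{2\pi N/(B_0|\omega|)\}$. A direct computation then gives
\begin{equation*}
\sum_{j=1}^N \mu_j(A) \;\ge\; B_0 M\bigl[K^2 + (2K+1)m\bigr] \;=\; B_0 M (K+m)^2 + B_0 M\, m(1-m),
\end{equation*}
and substituting $K+m = N/M$ and $M = B_0|\omega|/(2\pi)$ produces exactly \eqref{eigenvalue}.

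The main technical point is the density bound $d_k \le B_0|\omega|/(2\pi)$; everything else is linear-programming bookkeeping. This bound rests on the explicit value of $\Pi_k(x,x)$, which is standard for the Landau Hamiltonian but must be invoked carefully because it is gauge-invariant even though the kernel $\Pi_k(x,y)$ is not. Once this is secured the argument is a clean two-line minimization, and the extra term $\frac{B_0^2}{2\pi}|\omega|\,m(1-m)$ appears as a purely combinatorial remainder measuring how far $N$ is from an integer multiple of the Landau degeneracy per unit area.
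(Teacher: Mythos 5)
Your proposal is correct and follows essentially the same route as the paper: both decompose the trace over Landau-level projections, impose the two constraints $\sum_k d_k=N$ and $d_k\le B_0|\omega|/(2\pi)$ (the paper derives the latter via Bessel's inequality and the diagonal value $\Pi_k(x,x)=B_0/(2\pi)$, you via the equivalent operator inequality $P\le\chi_\omega$ and trace monotonicity), and then minimize by the same greedy/bathtub filling of Landau levels. The final algebra and the identification of the remainder $\frac{B_0^2}{2\pi}|\omega|\,m(1-m)$ coincide with the paper's computation.
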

\begin{proof}
Without loss of generality we may assume that $B_0>0$. Let $P_k$ be the orthogonal projection onto the $k$-th Landau level, $B_0(2k-1)$, of the Landau Hamiltonian $(i\nabla+A(x))^2$ in $L^2(\mathbb{R}^2)$ which is an integral operator with the kernel $P_k(x,y)$ -- see \cite{KW13}. Note that we have
\begin{equation}\label{P_k}
P_k(x,x)= \frac{1}{2\pi}B_0\,,
\end{equation}
\begin{eqnarray}
\lefteqn{\int_{\mathbb{R}^2}\left(\int_\omega|P_k(y,x)|^2\,\mathrm{d}x\right)\,\mathrm{d}y
=\int_\omega\left(\int_{\mathbb{R}^2}P_k(y,x)\,\overline{P_k(x,y)}\,\mathrm{d}y\right)
\,\mathrm{d}x} \nonumber \\ && \label{Landau level}
\qquad\qquad\quad =\int_\omega P_k(x,x)\,\mathrm{d}x=\frac{B_0}{2\pi}|\omega|\,. \phantom{AAAAAAAAAA}
\end{eqnarray}
Let $\phi_j$ be a normalized eigenfunction corresponding to the eigenvalue $\mu_j(A)$.
We put $f_{k,j}(y):=\int_\omega P_k(y,x)\phi_j(x)\,\mathrm{d}x$, where $y\in\mathbb{R}^2$, and furthermore
$$
F_N(k):=\sum_{j\le N}\|f_{k,j}\|_{L^2(\mathbb{R}^2)}^2\,.
$$
We have the following identity,
\begin{eqnarray*}
\lefteqn{\sum_{j\le N}\mu_j(A)=\sum_{j\le N}\|(i\nabla-A)\phi_j\|_{L^2(\omega)}^2}
 \\ &&
=\sum_{j\le N}\sum_{k\in\mathbb{N}}\|(i\nabla-A)f_{k,j}\|_{L^2(\mathbb{R}^2)}^2
 \\ &&
=\sum_{k\in\mathbb{N}}B_0(2k-1)\sum_{j\le N}\|f_{k,j}\|_{L^2(\mathbb{R}^2)}^2
 \\ &&
=\sum_{k\in\mathbb{N}}B_0(2k-1)F_N(k) =:J[F_N]\,.
\end{eqnarray*}
Moreover, the normalization of the functions $\phi_j$ implies
\begin{equation}\label{FN}
\sum_{k\in\mathbb{N}}F_N(k)=
\sum_{j\le N}\sum_{k\in\mathbb{N}}\|f_{k,j}\|_{L^2(\mathbb{R}^2)}^2
=\sum_{j\le N}\|\phi_j\|_{L^2(\omega)}^2=N\,.
\end{equation}
Finally, in view of Bessel's inequality the following estimate holds true,
\begin{eqnarray}
\lefteqn{F_N(k)=\sum_{j\le N}\|f_{k,j}\|_{L^2(\mathbb{R}^2)}^2
=\int_{\mathbb{R}^2}\left|\sum_{j\le N}\int_\omega P_k(y,x)\phi_j(x)\,\,
\mathrm{d}x\right|^2\,\mathrm{d}y} \nonumber \\ &&
\label{tildeB}
\qquad \le\int_{\mathbb{R}^2}
\left(\int_\omega|P_k(y,x)|^2\,\mathrm{d}x\right)\,\mathrm{d}y
=\frac{B_0}{2\pi}|\omega|\,. \phantom{AAAAAAA}
\end{eqnarray}
Let us now minimize the functional $J[F_N]$ under the constraints (\ref{FN}) and (\ref{tildeB}). To this aim, recall first the \emph{bathtub principle}  \cite{LL01}:

\medskip

\noindent Given a $\sigma$-finite measure space $(\Omega,\,\Sigma,\,\mu)$, let $f$ be a real-valued measurable function on $\Omega$ such that $\mu\{x:f(x)<t\}$ is finite for all $t\in\mathbb{R}$. Fix further a number $G>0$ and define a class of measurable functions on $\Omega$ by
$$
\mathcal{C}=\left\{g:0\le g(x)\le1\quad\text{for all}\quad x\quad\text {and}\quad\int_\Omega g(x)\mu(\mathrm{d}x)=G\right\}\,.
$$
Then the minimization problem of the functional
$$
I=\inf_{g\in\mathcal{C}}\int_\Omega f(x)g(x)\mu(\mathrm{d}x)
$$
is solved by
\begin{equation} \label{minimizer}
g(x)=\chi_{\{f<s\}}(x)+c\chi_{\{f=s\}}(x)\,,
\end{equation}
giving rise to the minimum value
$$
I=\int_{\{f<s\}}f(x)\mu(\mathrm{d}x)+cs\mu\{x:f(x)=s\}\,,
$$
where
$$
s=\sup\{t:\mu\{x:\,f(x)<t\}\le G\}
$$
and
$$
c\mu\{x:f(x)=s\}=G-\mu\{x:f(x)<s\}\,.
$$
Moreover, the minimizer given by (\ref{minimizer}) is unique if $G= \mu\{x:f(x)<s\}$ or if  $G=\mu\{x:f(x)\le s\}$.

\medskip

\noindent Applying this result to the functional $J[F_N]$ with the constraints (\ref{FN}) and (\ref{tildeB}) we find that the corresponding minimizers are
$$
F_N(k)=\frac{B_0}{2\pi}|\omega|\,,\quad k=1,2,\ldots,M\,,
$$
$$
F_N(M+1)=\frac{B_0}{2\pi}|\omega| m\,,
$$
$$
F_N(k)=0,\quad k>M+1,
$$
where $M=\left[\frac{2\pi N}{B_0|\omega|}\right]$ is the entire part and $m=\left\{\frac{2\pi N}{B_0|\omega|}\right\}$, so that $M+m=\frac{2\pi N}{B_0|\omega|}$. Consequently, we have the lower bound
\begin{eqnarray*}
\lefteqn{J[F_N]\ge\frac{B_0}{2\pi}|\omega|\sum_{k=1}^M(2k-1)B_0
+\frac{B_0}{2\pi}|\omega|m(2M+1)B_0} \\ &&
=\frac{B_0}{2\pi}|\omega| (M^2+2M m+m) \\ &&
=\frac{B_0^2}{2\pi}|\omega|(M+m)^2+\frac{B_0^2}{2\pi}|\omega|
(m-m^2)
\end{eqnarray*}
which implies
$$
\sum_{j\le N}\mu_j(A)\ge\frac{2\pi N^2}{|\omega|}+
\frac{B_0^2}{2\pi}|\omega| m(1-m)\,.
$$
This is the claim we have set out to prove.
\end{proof}

Since $0\le m<1$ by definition the last term can regarded as a non-negative remainder term, which is periodic with respect to $\frac{N} {\Phi}$, where $\Phi=\frac{B_0|\omega|}{2\pi}$ is the magnetic flux, i.e. the number of flux quanta through $\omega$. Note that for $N<\Phi$ the right-hand side equals $NB$ and for large enough $B_0$ this estimate is better than the lower bound in terms of the phase-space volume.

\subsection{A magnetic Berezin-type inequality}
\setcounter{equation}{0}
The result obtained in the previous subsection allows us to derive an extension of the Berezin inequality to the magnetic case. We keep the notation introduced above, in particular, $H_\omega(A)$ is the magnetic Dirichlet Laplacian on $\omega$ corresponding to a constant magnetic field $B_0$ and $\mu_j(A)$ are the respective eigenvalues. Without loss of generality we assume again that $B_0>0$. Then we can make the following claim.

\begin{theorem} \label{th-berezin}
Let $\omega\subset\mathbb{R}^2$ be open and finite, then for any $\Lambda>B_0$ we have
\begin{equation}\label{Berezin}
\sum_{j=1}^N(\Lambda-\mu_j(A))\le\frac{(\Lambda^2
-B_0^2)|\omega|}{8\pi}+\frac{(\Lambda-B_0) B_0|\omega|}
{4\pi}\left\{\frac{\Lambda+B_0}{2B_0}\right\}.
\end{equation}
\end{theorem}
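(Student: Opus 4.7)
The result is the Legendre dual of the magnetic Li--Yau bound (\ref{eigenvalue}) just established, so the natural route is Legendre--Fenchel duality. Starting from (\ref{eigenvalue}) the trivial rearrangement
\[
\sum_{j=1}^N\bigl(\Lambda-\mu_j(A)\bigr)\;\le\;N\Lambda-\frac{2\pi N^2}{|\omega|}-\frac{B_0^2|\omega|}{2\pi}\,m(1-m),
\]
with $m=\{2\pi N/(B_0|\omega|)\}$, reduces the claim to bounding the right-hand side uniformly in $N$ by the expression in (\ref{Berezin}).

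To carry out this optimization I would parametrize $N=\frac{B_0|\omega|}{2\pi}(k+m)$ with $k\in\{0,1,2,\ldots\}$ and $m\in[0,1)$, so that the right-hand side becomes
\[
\frac{B_0|\omega|}{2\pi}\Bigl[(k+m)\Lambda-B_0\bigl(k^2+m(2k+1)\bigr)\Bigr],
\]
which is piecewise linear in $N$ with slope $\Lambda-(2k+1)B_0$ on the $k$-th interval. Setting $\alpha:=(\Lambda+B_0)/(2B_0)$, $K_0:=[\alpha]$, and $\nu:=\{\alpha\}$, this slope is positive exactly for $k<K_0$, so the supremum is attained at the breakpoint $k=K_0$, $m=0$, yielding the Berezin--Landau--level expression $\frac{B_0|\omega|}{2\pi}(K_0\Lambda-B_0 K_0^2)$.

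The final algebraic step is to substitute $\Lambda=(2K_0+2\nu-1)B_0$ and reorganize: the smooth part collapses into $(\Lambda^2-B_0^2)|\omega|/(8\pi)$, and the remaining pieces combine into the advertised correction involving $\{(\Lambda+B_0)/(2B_0)\}$, producing (\ref{Berezin}). The main obstacle is the careful bookkeeping of the piecewise--linear convex structure induced by the fractional--part remainder $m(1-m)$ from (\ref{eigenvalue}) during the optimization, together with the subsequent algebraic simplification that identifies the extremal value with the particular form on the right-hand side of (\ref{Berezin}).
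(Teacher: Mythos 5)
Your reduction and the paper's are essentially the same: both pass from (\ref{eigenvalue}) to the problem of bounding $f(z)=z\Lambda-\frac{2\pi z^2}{|\omega|}-\frac{B_0^2|\omega|}{2\pi}m(1-m)$ from above, and your observation that $f$ is piecewise \emph{linear} with slope $\Lambda-(2k+1)B_0$ on the $k$-th flux interval is correct and in fact cleaner than the paper's treatment (the quadratic terms cancel, so generically there is no interior critical point). Your identification of the supremum as the breakpoint value $\frac{B_0|\omega|}{2\pi}K_0(\Lambda-K_0B_0)$ with $K_0=\left[\frac{\Lambda+B_0}{2B_0}\right]$ is also correct.

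The gap is in the last step, which you assert but do not carry out: the extremal value does \emph{not} reorganize into the right-hand side of (\ref{Berezin}). Writing $\nu=\left\{\frac{\Lambda+B_0}{2B_0}\right\}$ and substituting $\Lambda=(2K_0+2\nu-1)B_0$, a direct computation gives
\begin{equation*}
\frac{(\Lambda^2-B_0^2)|\omega|}{8\pi}+\frac{(\Lambda-B_0)B_0|\omega|}{4\pi}\,\nu-\frac{B_0|\omega|}{2\pi}\,K_0(\Lambda-K_0B_0)=\frac{B_0^2|\omega|}{2\pi}\,\nu\,(K_0+2\nu-2)\,,
\end{equation*}
which is negative whenever $K_0=1$ and $0<\nu<\frac12$, i.e.\ for every $\Lambda\in(B_0,2B_0)$. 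In that range the supremum you computed equals $\frac{B_0(\Lambda-B_0)|\omega|}{2\pi}$, while the right-hand side of (\ref{Berezin}) equals $\frac{\Lambda(\Lambda-B_0)|\omega|}{4\pi}$, which is strictly smaller. So the optimization, carried out correctly exactly as you set it up, terminates in the Landau-level Riesz mean $\frac{B_0|\omega|}{2\pi}\sum_{k\ge1}(\Lambda-(2k-1)B_0)_+$ --- the bound of \cite{ELV00} --- and cannot be pushed down to (\ref{Berezin}) for $\Lambda\in(B_0,2B_0)$; no amount of bookkeeping will close this, since the needed inequality is false. (Your computation in fact exposes a corresponding defect in the paper's own argument: the breakpoint $k=K_0$, which is precisely where the maximum of $f$ sits, is supposed to be covered by (\ref{maximum}), but that quadratic has roots $\frac{\Lambda\pm B_0}{2B_0}$ and is negative at $k=K_0$ whenever $\frac{\Lambda+B_0}{2B_0}\notin\mathbb{Z}$.)
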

\begin{proof}
Subtracting $N\Lambda$ from both sides of inequality (\ref{eigenvalue}), we get
\begin{equation}\label{eigenvalue sum}
\sum_{j=1}^N(\Lambda-\mu_j(A))\le N\Lambda-
\frac{2\pi N^2}{|\omega|}-\frac{B_0^2}{2\pi}|\omega|m(1-m)\,,
\end{equation}
and consequently
$$
\sum_{j=1}^N(\Lambda-\mu_j(A))_+\le \left(N\Lambda-\frac{2\pi
N^2}{|\omega|}-\frac{B_0^2}{2\pi}|\omega|m(1-m)\right)_+.
$$
We are going to investigate the function $f:\,\mathbb{R}_+\to \mathbb{R}$,
$$
f(z):=z\Lambda-\frac{2\pi z^2}{|\omega|}
-\frac{B_0^2|\omega|}{2\pi}\left\{\frac{2\pi z}{B_0|\omega|}\right\}
\left(1-\left\{\frac{2\pi z}{B_0|\omega|}\right\}\right),
$$
on the intervals
$$
\frac{B_0|\omega|k}{2\pi}
\le z<\frac{B_0|\omega|(k+1)}{2\pi},\quad k=0,1,2,\ldots,
$$
looking for an upper bound. It is easy to check that
\begin{eqnarray*}
\lefteqn{f'(z)=\Lambda-\frac{4\pi}{|\omega|}z-\frac{B_0^2
|\omega|}{2\pi}\frac{2\pi}{B_0|\omega|}+\frac{2B_0^2|\omega|}{2\pi}
\left\{\frac{2\pi z}{B_0|\omega|}\right\}\frac{2\pi}{B_0|\omega|}}
\\ &&
\quad =\Lambda-\frac{4\pi}{|\omega|}z-B_0+2B_0\left\{\frac{2\pi z}{B_0|\omega|}\right\}, \phantom{AAAAAAAAAAA}
\end{eqnarray*}
thus the extremum of $f$ is achieved at the point $z_0$ such that
\begin{equation}\label{x_0}
\Lambda-B_0-\frac{4\pi}{|\omega|}z_0+2B_0
\left\{\frac{2\pi z_0}{B_0|\omega|}\right\}=0\,.
\end{equation}
Denoting $x_0:=\frac{2\pi z_0}{B_0|\omega|}$, the condition reads $\Lambda-2B_0 x_0-B_0+2B_0 \{x_0\}=0$ giving
$$
x_0=\frac{\Lambda-B_0+2B_0 \{x_0\}}{2B_0}.
$$
It yields the value of function $f$ at $z_0$, namely
\begin{eqnarray}
\lefteqn{f(z_0)=\frac{\Lambda B_0|\omega|}{2\pi}
\frac{(\Lambda-B_0+2B_0\{x_0\})}{2B_0}-\frac{B_0^2|\omega|}
{2\pi}\left(\frac{\Lambda-B_0+2B_0\{x_0\}}{2B_0}\right)^2} \nonumber \\ &&
\quad -\frac{B_0^2|\omega|}{2\pi}\{x_0\}(1-\{x_0\}) \nonumber \\ &&
=\frac{\Lambda|\omega|}{4\pi}
\left(\Lambda-B_0+2B_0\{x_0\}\right)-\frac{|\omega|}{8\pi}\left(\Lambda-
B_0+2B_0\{x_0\}\right)^2 \nonumber \\ && \quad -\frac{B_0^2|\omega|}{2\pi}\{x_0\}(1-\{x_0\}) \nonumber \\ &&
=\frac{|\omega|}{4\pi}\biggl(\Lambda(\Lambda-B_0+2B_0\{x_0\})
-\frac{(\Lambda-B_0+2B_0\{x_0\})^2}{2} \nonumber \\ && \quad -2B_0^2\{x_0\}(1-\{x_0\}) \biggr) \nonumber \\ &&
=\frac{|\omega|}{4\pi}\biggl(\Lambda^2-\Lambda B_0+2\Lambda B_0\{x_0\}-\frac{\Lambda^2}{2}+\Lambda B_0-\frac{B_0^2}{2}
-2\Lambda B_0\{x_0\} \nonumber \\ &&
\quad +2B_0^2\{x_0\}  -2B_0^2
\{x_0\}^2-2B_0^2\{x_0\}+2B_0^2\{x_0\}^2\biggr) \nonumber \\ && =\frac{|\omega|(\Lambda^2-B_0^2)}{8\pi}\,. \label{inequality}
\end{eqnarray}
Furthermore, the values of $f$ at the endpoints $\frac{B_0 k|\omega|}{2\pi},\,k=0,1,2,\ldots\,$, equal
$$
f\left(\frac{B_0 k|\omega|}{2\pi}\right)=\frac{B_0 \Lambda k|\omega|}{2\pi}-\frac{2\pi}{|\omega|}
\frac{B_0^2 k^2|\omega|^2}{4\pi^2}=\frac{B_0 k|\omega|}{2\pi}(\Lambda-k B_0)\,.
$$
Consider now an integer $m$ satisfying $1\le m\le\left[\frac{\Lambda+B_0}{2B_0} \right]$, then
\begin{eqnarray}
\nonumber \lefteqn{f\left(\frac{B_0|\omega|}{2\pi}\left(\left[\frac{\Lambda+B_0}{2B_0}
\right]-m\right)\right)} \\ && \nonumber
=\frac{B_0|\omega|}{2\pi}\left(\left[\frac{\Lambda+
B_0}{2B_0}\right]-m\right)\left(\Lambda-\left(\left[\frac{\Lambda+
B_0}{2B_0}\right]-m\right)B_0\right) \\ && \nonumber
\le\frac{B_0|\omega|}{2\pi}\left(\frac{\Lambda+B_0}{2B_0}-m\right)
\left(\Lambda-\left(\frac{\Lambda+B_0}{2B_0}-m\right)B_0+\left\{
\frac{\Lambda+B_0}{2B_0}\right\}B_0\right) \\ &&
\nonumber=\frac{\left(\Lambda-(2m-1)B_0\right)|\omega|}{4\pi}\left(\frac{\Lambda
+(2m-1)B_0}{2}+\left\{\frac{\Lambda+B_0}
{2B_0}\right\}B_0\right)\\ && \nonumber=\frac{\left(\Lambda^2-(2m-1)^2
B_0^2\right)|\omega|}{8\pi}+\frac{\left(\Lambda-(2m-1)B_0\right)B_0
|\omega|}{4\pi}\left\{\frac{\Lambda+B_0}{2B_0}\right\}\\ && \label{extremum}
\le\frac{(\Lambda^2-B_0^2)|\omega|}{8\pi}+\frac{\left(\Lambda-B_0\right)
B_0|\omega|}{4\pi}\left\{\frac{\Lambda+B_0}{2B_0}\right\}.
\end{eqnarray}
On the other hand, for integers satisfying $k\ge \left[\frac{\Lambda+B_0}{2B_0}\right]$ one can check easily that
$$4
B_0^2 k^2-4B_0\Lambda k+\Lambda^2-B_0^2\ge0\,,
$$
which means
\begin{equation}\label{maximum}
\frac{B_0 k|\omega|}{2\pi}(\Lambda-B_0 k)\le \frac{(\Lambda^2-B_0^2)|\omega|}{8\pi}\,.
\end{equation}
Combining inequalities (\ref{extremum}) and (\ref{maximum}) we conclude that at the interval endpoints, $z=\frac{B_0 k|\omega|}{2\pi},\,k=0,1,2,\ldots\,$, the value of function $f$ does not exceed $\frac{(\Lambda^2-B_0^2)|\omega|}{8\pi}+\frac{(\Lambda-B_0) B_0|\omega|}{4\pi}\left\{\frac{\Lambda+B_0}{2B_0}\right\}$. Hence in view of (\ref{inequality}) we have
$$
f(z)\le\frac{(\Lambda^2-B_0^2)|\omega|}{8\pi}+\frac{(\Lambda-B_0)
B_0|\omega|}{4\pi}\left\{\frac{\Lambda+B_0}{2B_0}\right\}
$$
for any $z\ge 0$. Combining this inequality above with the bound (\ref{eigenvalue sum}), we arrive at the desired conclusion.
\end{proof}

\begin{remark} \label{AL-rmk}
{\rm Using the Aizenman-Lieb procedure \cite{AL78} and the fact that $\inf\sigma(H_\omega(A))\ge B_0$ we can get also bound for other eigenvalue moments. Specifically, for any $\sigma\ge3/2$ Theorem~\ref{th-berezin} implies}
\begin{eqnarray*}
&& \hspace{-1.5em} \sum_{j=1}^N(\Lambda-\mu_j(A))_+^{\sigma+1/2}=\frac{\Gamma(\sigma+3/2)}
{\Gamma(\sigma-1/2)\Gamma(2)}\int_0^\infty(\Lambda-t)_+^{\sigma-3/2}\sum_{j=1}^N
(t-\mu_j(A))_+\,\mathrm{d}t \\ &&
\le\frac{\Gamma(\sigma+3/2)}{\Gamma(\sigma-
1/2)}\int_0^\infty(\Lambda-t)_+^{\sigma-3/2}\biggl(\frac{(t^2-B_0^2)_+
|\omega|}{8\pi} \\ && \quad + \frac{(t-B_0)_+B_0|\omega|}{4\pi}\left\{\frac
{\Lambda+B_0}{2B_0}\right\}\biggr)\,\mathrm{d}t \\ &&
\le\frac{\Gamma(\sigma+3/2)
|\omega|}{\Gamma(\sigma-1/2)}\biggl(\frac{(\Lambda^2-B_0^2)_+}{8\pi} \\ && \quad +
\frac{(\Lambda-B_0)_+B_0}{4\pi}\left\{\frac{\Lambda+B_0}
{2B_0}\right\}\biggr)\int_0^\infty(\Lambda-t)_+^{\sigma-3/2}\,\mathrm{d}t \\ && \
=\frac{\Gamma(\sigma+3/2)\Lambda^{\sigma-1/2}|\omega|}{\Gamma(\sigma-1/2)(2\sigma-1)}
\left(\frac{(\Lambda^2-B_0^2)_+}{4\pi}+\frac{(\Lambda-B_0)_+B_0}{2\pi}\left\{\frac{\Lambda+B_0}
{2B_0}\right\}\right).
\end{eqnarray*}
\end{remark}

\subsection{Comparison to earlier results}

Given a set $\omega\subset\mathbb{R}^2$ and a point $x\in\omega$, we denote by
$$
\delta(x)=\mathrm{dist}(x,\partial\omega)=\min_{y\in\partial\omega}|x-y|
$$
the distance of $x$ to the boundary, then
$$
R(\omega)=\sup_{x\in\omega}\delta(x)
$$
is the in-radius of $\omega$. Furthermore, given a $\beta>0$ we introduce
$$
\omega_\beta=\{x\in\omega:\,\,\delta(x)<\beta\}\,,\quad\beta>0\,,
$$
and define the quantity
 \begin{equation}\label{sigma}
\sigma(\omega):=\inf_{0<\beta<R(\omega)}\frac{|\omega_\beta|}
{\beta}\,.
\end{equation}
Using these notions and the symbols introduced above we can state the following result obtained in the work of two of us \cite{KW13}:

\begin{theorem} Let $\omega\subset\mathbb{R}^2$ be an open convex domain, then for any $\Lambda>B_0$ we have
\begin{eqnarray}\label{Ber.In}
\lefteqn{\sum_{j=1}^N(\Lambda-\mu_j(A))\le\frac{\Lambda^2|\omega|}{8\pi}
-\frac{1}{512\pi}\frac{\sigma^2(\omega)}{|\omega|}\Lambda} \\ &&
-B_0^2\left(\frac{1}{2}-\left\{\frac{\Lambda+B_0}{2
B_0}\right\}\right)^2\left(\frac{|\omega|}{2\pi}
-\frac{1}{128\pi}\frac{\sigma^2(\omega)}{|\omega|\Lambda}\right).
\nonumber
\end{eqnarray}
\end{theorem}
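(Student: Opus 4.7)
The plan is to refine the argument used in Theorem 3.1 and Theorem~\ref{th-berezin} by inserting a boundary-sensitive correction into the Landau-level bound on the Bessel-type quantity $F_N(k)$. First I would project normalized eigenfunctions $\phi_j$ of $H_\omega(A)$ (extended by zero to $\mathbb{R}^2$) onto the $k$-th Landau level and set, as in the proof of the Li-Yau bound,
\[
f_{k,j}(y) = \int_\omega P_k(y,x)\phi_j(x)\,\mathrm{d}x, \qquad F_N(k) = \sum_{j\le N}\|f_{k,j}\|_{L^2(\mathbb{R}^2)}^2,
\]
so that $\sum_{j\le N}\mu_j(A) = \sum_{k\ge 1} B_0(2k-1) F_N(k)$ and $\sum_k F_N(k) = N$. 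Dually, writing the Berezin sum as
\[
\sum_{j=1}^N(\Lambda - \mu_j(A)) = \sum_{k\ge 1}\bigl(\Lambda - B_0(2k-1)\bigr) F_N(k),
\]
the question reduces to a constrained maximization of a linear functional in $F_N$ against the pointwise box constraint $F_N(k)\le B_0|\omega|/(2\pi)$ used in Theorem~\ref{th-berezin}.

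The key novelty compared to the proof of Theorem~\ref{th-berezin} is a \emph{strict} improvement of that box constraint, encoding the Dirichlet boundary condition and the convex geometry. Using the representation $F_N(k) = \mathrm{tr}(\chi_\omega P_k \Pi_N P_k \chi_\omega)$ with $\Pi_N$ the rank-$N$ projection onto $\mathrm{span}(\phi_1,\dots,\phi_N)$, and partitioning $\omega$ into a bulk region and the boundary layer $\omega_\beta$, one estimates the contribution of $\omega_\beta$ by exploiting the Gaussian localization of the kernel $P_k(x,y)$ on the magnetic length scale $B_0^{-1/2}$ together with convexity of $\omega$. After optimizing the layer width $\beta$, one obtains a refined bound of the form
\[
F_N(k) \le \frac{B_0|\omega|}{2\pi} - c\,\frac{\sigma^2(\omega)}{|\omega|},
\]
with an explicit constant $c$; the quantity $\sigma(\omega)$ from \eqref{sigma} controls precisely the mass that must be lost in a boundary strip.

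With this sharper constraint in hand, I would rerun the bathtub argument of Theorem~\ref{th-berezin}: the minimizing configuration is again a ladder, saturating the (now smaller) pointwise bound for $k$ below a critical level $M = \bigl[(\Lambda+B_0)/(2B_0)\bigr]$ and carrying a fractional weight at $k = M+1$. Summing the arithmetic progression and separating integer and fractional parts of $(\Lambda+B_0)/(2B_0)$ produces, exactly as in \eqref{inequality} and \eqref{extremum}, the principal Berezin term $\Lambda^2|\omega|/(8\pi)$ and the oscillating Landau contribution built from $\bigl(\tfrac{1}{2} - \{(\Lambda+B_0)/(2B_0)\}\bigr)^2$; the refinement in Step~2 feeds through linearly into the two correction terms $-\sigma^2(\omega)\Lambda/(512\pi|\omega|)$ and $+B_0^2\bigl(\tfrac{1}{2}-\{\cdot\}\bigr)^2 \sigma^2(\omega)/(128\pi|\omega|\Lambda)$.

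The main obstacle is the quantitative boundary refinement in Step~2: one has to show \emph{without losing a factor depending on the spectral window} that the convex Dirichlet constraint reduces $F_N(k)$ by at least $c\,\sigma^2(\omega)/|\omega|$ uniformly for all relevant $k$. The magnetic length $B_0^{-1/2}$ and the geometric scale $\sigma(\omega)/|\omega|$ have to be balanced, and the specific constants $1/512$ and $1/128$ suggest a careful optimization where the bulk cost and the boundary gain meet. Once this uniform reduction is established, the rest of the argument proceeds mechanically through the bathtub principle and the explicit arithmetic of the proof of Theorem~\ref{th-berezin}.
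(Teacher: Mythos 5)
First, a point of orientation: the paper you are working from does not actually prove this theorem --- it is quoted verbatim from \cite{KW13} for the purpose of the comparison in \eqref{comparison}--\eqref{comp-rat}, so there is no in-paper proof to match your argument against. Judged on its own, your architecture (project onto Landau levels, reduce to maximizing $\sum_k(\Lambda-B_0(2k-1))F_N(k)$ under a pointwise constraint on $F_N(k)$, and improve that constraint using the Dirichlet condition and convexity) is indeed consistent with the shape of \eqref{Ber.In}: one checks that $\sum_k(\Lambda-B_0(2k-1))_+=\frac{\Lambda^2}{4B_0}-B_0\bigl(\frac12-\{\frac{\Lambda+B_0}{2B_0}\}\bigr)^2$, so the stated bound is exactly what a constraint of the form $F_N(k)\le\frac{B_0}{2\pi}\bigl(|\omega|-\frac{\sigma^2(\omega)}{64\Lambda|\omega|}\bigr)$ would produce. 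This already exposes the first problem with your plan: the deficit you postulate, $c\,\sigma^2(\omega)/|\omega|$ with $c$ an absolute constant, has the wrong dependence on the parameters. It must carry a factor $B_0/\Lambda$; otherwise multiplying by the Landau sum gives a correction of order $\sigma^2\Lambda^2/(B_0|\omega|)$ rather than the stated $\sigma^2\Lambda/|\omega|$, and moreover a $\Lambda$-independent deficit would make the right-hand side of your refined constraint negative as $B_0\to0$, which is impossible since $F_N(k)\ge0$.

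The deeper gap is in the mechanism you propose for the refined constraint. Writing $F_N(k)=\mathrm{tr}(\chi_\omega P_k\Pi_N P_k\chi_\omega)$, the crude bound comes from replacing $\Pi_N$ by the identity, and then $\mathrm{tr}(\chi_\omega P_k\chi_\omega)=\frac{B_0}{2\pi}|\omega|$ \emph{exactly}, by \eqref{Landau level}. No amount of information about the Gaussian localization of the kernel $P_k(x,y)$, or about the geometry of $\omega$, can improve this step, because there is nothing left to lose: the entire deficit must come from bounding $\mathrm{tr}\bigl(\chi_\omega P_k(I-\Pi_N)P_k\chi_\omega\bigr)$ from \emph{below}. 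That requires a quantitative statement about the range of $\Pi_N$, namely that Dirichlet eigenfunctions with energy at most $\Lambda$ carry little $L^2$ mass in the boundary layer $\omega_\beta$ with $\beta\sim\Lambda^{-1/2}$ (this is where $\Lambda$ enters the deficit and where $\sigma(\omega)=\inf_\beta|\omega_\beta|/\beta$ and convexity are used), combined with a lower bound showing that $P_k\chi_\omega$ does place a definite amount of mass there. You correctly identify this as ``the main obstacle,'' but you do not supply it, and your sketch attributes the loss to the wrong object ($P_k$ rather than $\Pi_N$). As it stands the proposal is a plausible outline whose entire analytic content --- the uniform-in-$k$, $\Lambda$-dependent improvement of \eqref{tildeB} --- is missing, so it cannot be accepted as a proof.
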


\bigskip

\noindent To make a comparison to the conclusions of the previous section, let us make both $B_0$ and $\Lambda$ large keeping their ratio fixed. Specifically, we choose a $\Lambda$ from the interval $(B_0,\,2B_0)$ writing it as $\Lambda=B_0(1+\alpha)$ with an $\alpha\in (0,1)$. The second term on the right-hand side of (\ref{Berezin}) is then $\frac{\alpha^2 B_0^2|\omega|}{8\pi}$, and we want to show that the difference between the bounds (\ref{Ber.In}) and (\ref{Berezin}) tends to plus infinity as $B_0\to\infty$. To this aim, we write $\Lambda=B_0(1+\alpha)$ with an $\alpha\in (0,1)$, then
 \begin{equation}\label{comparison}
\frac{(\Lambda^2-B_0^2)|\omega|}{8\pi}+
\frac{(\Lambda-B_0)B_0|\omega|}{4\pi}\left\{\frac{\Lambda+B_0}{2
B_0}\right\}= \frac{B_0^2|\omega|}{4\pi}\,\alpha(1+\alpha)\,.
\end{equation}
On the other hand, a short calculation shows that for our choice of $B_0$ and $\Lambda$ the right-hand side of the bound (\ref{Ber.In}) becomes
$$
=\frac{\Lambda^2|\omega|}{8\pi}-\frac{B_0^2|\omega|}{2\pi}\left(\frac{1}{2}-
\frac{\alpha}{2}\right)^2+\frac{\Lambda}{512\pi}\frac{\sigma^2(\omega)}{|\omega|}
\left(-1+\frac{(1-\alpha)^2}{(1+\alpha)^2}\right),
$$
in particular, after another easy manipulation we find that for large $B_0$ this expression behaves as $\frac{B_0^2|\omega|}{2\pi}\alpha +\mathcal{O}(B_0)$. Comparing the two bounds we see that
\begin{equation}\label{comp-diff}
\text{\emph{rhs of} (\ref{Ber.In})} - \text{\emph{rhs of} (\ref{Berezin})}=\frac{B_0^2|\omega|}{4\pi}\,\alpha(1-\alpha)+\mathcal{O}(B_0)
\end{equation}
tending to plus infinity as $B_0\to\infty$. At the same time,
\begin{equation}\label{comp-rat}
\frac{\text{\emph{rhs of} (\ref{Ber.In})}}{\text{\emph{rhs of} (\ref{Berezin})}} =\frac{2}{1+\alpha}\,+\mathcal{O}(B_0^{-1})
\end{equation}
illustrating that the improvement represented by Theorem~\ref{th-berezin} is most pronounced for eigenvalues near the spectral threshold.

\section{Examples: a two-dimensional disc}
\label{s: disc}
\setcounter{equation}{0}

Spectral analysis simplifies if the domain $\omega$ allows for a separation of variables. In this section we will discuss two such situations.

\subsection{Constant magnetic field}

We suppose that $\omega$ is a disc and the applied magnetic field is homogeneous. As usual in cases of a radial symmetry, the problem can be reduced to degenerate hypergeometric functions. Specifically, we will employ the Kummer equation
\begin{equation}\label{Kummer}
r\frac{\mathrm{d}^2\omega}{\mathrm{d}r^2}+(b-r)\frac{\mathrm{d}\omega}{\mathrm{d}r}-a\omega=0
\end{equation}
with real valued parameters $a$ and $b$ which has two linearly independent solutions  $M(a, b, r)$ and $U(a, b, r)$, the second one of which has a singularity at zero \cite{AS64}.

Given an $\alpha>0$, we denote by $\big\{a^k_{|m|, \alpha}\big\}_{k\in\mathbb{N}}$ the set of the first parameter values such that $M(a^k_{|m|, \alpha}, |m|+1, \alpha)=0$. Since for any $a, b\ge0$ the function $M(a, b, r)$ has no positive zeros \cite{AS64}, all the $a^k_{|m|, \alpha}$ are negative. Then the following claim is valid.

\begin{theorem} \label{thm:homdisc}
Let $H_\omega(A)$ be the magnetic Dirichlet Laplacian corresponding to a constant magnetic field $B_0$ and $\omega$ being the two dimensional disc with center at the origin and radius $r_0>0$. The eigenvalues of $H_\omega(A)$ coincides with
$$
\left\{B_0+B_0\left(|m|-m-2a^k_{|m|, \sqrt{B_0} \, r_0/ \sqrt{2}}\right)\right\}_{m\in\mathbb{Z},\, k\in\mathbb{N}}\,.
$$
\end{theorem}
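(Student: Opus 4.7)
My plan is to diagonalize $H_\omega(A)$ by separation of variables in polar coordinates. First I fix the symmetric gauge $A(x)=\frac{B_0}{2}(-x_2,x_1)$, under which $H_\omega(A)$ commutes with rotations about the origin. Decomposing eigenfunctions as $u(r,\varphi)=e^{im\varphi} f_m(r)$ with $m\in\mathbb{Z}$ yields the orthogonal splitting $L^2(\omega)=\bigoplus_{m\in\mathbb{Z}}\mathcal{L}_m$, and on each sector $H_\omega(A)$ reduces to the radial operator
\begin{equation*}
L_m f = -f'' - \frac{1}{r}\,f' + \Bigl(\frac{m}{r}-\frac{B_0 r}{2}\Bigr)^2 f
\end{equation*}
on $L^2((0,r_0),r\,\mathrm{d}r)$, subject to the Dirichlet condition $f(r_0)=0$ and the natural square-integrability at the origin.

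Second, I would reduce the radial eigenvalue equation $L_m f=\mu f$ to Kummer's equation (\ref{Kummer}). Expanding the square produces a constant shift $-mB_0$ together with the Landau-oscillator potential $\frac{m^2}{r^2}+\frac{B_0^2 r^2}{4}$. Factoring out the regular behavior at $0$ and the Gaussian decay via the ansatz $f(r)=r^{|m|}e^{-B_0 r^2/4}\,g(\rho)$ with the rescaled variable $\rho$ of $r$ appropriate to the third argument of $M$, a direct computation shows that $g$ satisfies (\ref{Kummer}) with $b=|m|+1$ and
\begin{equation*}
a = \frac{|m|+1}{2} - \frac{\mu+mB_0}{2B_0}.
\end{equation*}
Since $U(a,b,\cdot)$ is singular at the origin, the $L^2$-regularity at $0$ forces $g$ to be a scalar multiple of $M(a,|m|+1,\cdot)$.

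Third, the Dirichlet condition at $r_0$ becomes the vanishing condition for $M(a,|m|+1,\cdot)$ at the boundary, which by the definition of $a^k_{|m|,\alpha}$ selects $a=a^k_{|m|,\alpha}$ for $k\in\mathbb{N}$. Inverting the displayed identity for $a$ then yields
\begin{equation*}
\mu = B_0 + B_0\bigl(|m|-m-2a^k_{|m|,\alpha}\bigr),
\end{equation*}
the asserted spectrum. To finish I would check that the resulting family of eigenfunctions indexed by $(m,k)\in\mathbb{Z}\times\mathbb{N}$ is complete in $L^2(\omega)$, which follows from the angular decomposition combined with the completeness of the radial Kummer eigenfunctions in each sector. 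The main obstacle is the algebraic bookkeeping of the reduction to (\ref{Kummer}): the parameter $a$ depends on $m$ (not $|m|$), which is what produces the characteristic asymmetry $|m|-m$ in the final formula and reflects the fact that for a fixed-sign magnetic field the states with $m\ge 0$ lie on the lowest Landau level while those with $m<0$ carry an additional Landau shift.
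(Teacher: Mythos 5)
Your proposal is correct and follows essentially the same route as the paper's proof: partial-wave decomposition in the symmetric gauge, the ansatz $r^{|m|}e^{-B_0 r^2/4}$ times a function of the rescaled variable $B_0 r^2/2$ reducing to Kummer's equation with $b=|m|+1$, elimination of the singular solution $U$ by regularity at the origin, and the Dirichlet condition at $r_0$ yielding $a=a^k_{|m|,\cdot}$; your expression for $a$ agrees with the paper's after substituting $\lambda=\mu+mB_0$. The only (harmless) difference is that you absorb the shift $mB_0$ directly into the parameter $a$ rather than first passing to the auxiliary operator $\tilde h_m$, and you make the completeness check explicit where the paper leaves it implicit.
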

\begin{proof}
We employ the standard partial wave decomposition -- see, e.g., \cite{Er96}
\begin{equation}\label{decomp.}
L^2(\omega)=\bigoplus_{m=-\infty}^\infty L^2((0, r_0), 2\pi r\,\mathrm{d}r)\,,
\end{equation}
and $H_\omega(A)=\bigoplus_{m=-\infty}^\infty h_m$, where
\begin{equation}
\label{hm*}h_m:=-\frac{\mathrm{d}^2}{\mathrm{d}r^2}-\frac{1}{r}\frac{\mathrm{d}}{\mathrm{d}r}+\left(\frac{m}{r}-\frac{B_0 r}{2}\right)^2.
\end{equation}
The last named operator differs by $mB_0$ from the operator
\begin{equation}\label{tildehm}
\tilde{h}_m=-\frac{\mathrm{d}^2} {\mathrm{d}r^2}-\frac{1}{r}\frac{\mathrm{d}}{\mathrm{d}r}+\frac{m^2}{r^2}+\frac{B_0^2 r^2}{4}
\end{equation}
on the interval $(0,r_0)$ with Dirichlet boundary condition at the endpoint $r_0$. Looking for solutions to the eigenvalue equation
\begin{equation}\label{magn.lapl.disc}
\tilde{h}_mu=\lambda u
\end{equation}
we employ the Ansatz
$$
u(r)=r^{|m|} e^{-B_0 r^2/4}v(r)\,,
$$
where $v\in L^2((0, r_0), r\mathrm{d}r)$.
Computing the first two derivatives we get
$$
\tilde{h}_mu=\left(-v''-\frac{2|m|+1}{r}v^{\prime}+B_0(|m|+1)v(r)+B_0 r v^{\prime}\right)r^{|m|} e^{-B_0 r^2/4},
$$
hence the equation (\ref{magn.lapl.disc}) can rewritten as
\begin{equation}\label{change.of.var.}
v''+\left(\frac{2|m|+1}{r}-B_0 r\right)v^{\prime}-(B_0(|m|+1)-\lambda)v=0\,.
\end{equation}
Using the standard substitution we pass to the function $g(r)=v\left(\frac{\sqrt{2r}}{\sqrt{B_0}}\right)$ belonging to $L^2\left(0, B_0 r_0^2/2\right)$.
Expressing the derivatives of $v$ in terms of those of $g$, one can rewrite equation (\ref{change.of.var.}) as
$$
rg''(r)+\left(|m|+1-r\right)g^{\prime}-\frac{\left((|m|+1)B_0-\lambda\right)}{2B_0}g(r)=0\,,
$$
which is the Kummer equation with $b=|m|+1$ and $a=\frac{(|m|+1)B_0-\lambda}{2B_0}$. The mentioned singularity of its solution $U(a, b, r)$ for small $r$, namely \cite{AS64}
$$
U(a, b, r)=\frac{\Gamma(b-1)}{\Gamma(a)}r^{1-b}+\mathcal{O}(r^{b-2})\quad\text{for}\quad b>2
$$
and
$$
U(a, 2, z)=\frac{1}{\Gamma(a)}\frac{1}{r}+\mathcal{O}(\ln r)\,,\quad U(a, 1, r)=-\frac{1}{\Gamma(a)}\ln r+\mathcal{O}(1)
$$
means that $u(r)=r^{|m|} e^{-B_0 r^2/4}U\left(\frac{(|m|+1)B_0-\lambda}{2B_0}, |m|+1, \frac{B_0\,r^2}{2}\right)$ does not belong to $\mathcal{H}_0^1((0, r_0),\,r\mathrm{d}r)$. Consequently, the sought solution of (\ref{magn.lapl.disc}) has the form
$$
r^{|m|} e^{-B_0 r^2/4}M\left(\frac{(|m|+1)B_0-\lambda}{2B_0}, |m|+1, \frac{B_0\,r^2}{2}\right)\,,
$$
and in view of the Dirichlet boundary conditions at $r_0$ we arrive at the spectral condition
$$
M\left(\frac{(|m|+1)B_0-\lambda}{2B_0}, |m|+1, \frac{B_0\,r_0^2}{2}\right)=0\,.
$$
which gives $\left\{(|m|+1)B_0-2B_0 a^k_{|m|, \sqrt{B_0}\,r_0/\sqrt{2}}\right\}_{m\in\mathbb{Z}, \,k\in\mathbb{N}}$ as the eigenvalue set; returning to the original operator $h_m$ we get the claim of the theorem.
\end{proof}

\subsection{Radial magnetic field}

If the magnetic field is non-constant but still radially symmetric, in general one cannot find the eigenvalues explicitly but it possible to find a bound to the eigenvalue moments in terms of an appropriate radial two-dimensional Schr\"odinger operator.

\begin{theorem} \label{thm:raddisc}
Let $H_\omega(A)$ be the magnetic Dirichlet Laplacian $H_\omega(A)$ on a disc $\omega$ of radius $r_0>0$ centered at the origin with a radial magnetic field $B(x)=B(|x|)$. Assume that
\begin{equation}\label{assump.}
\alpha:=\int_0^{r_0} sB(s)\,\mathrm{d}s<\frac{1}{2}\,.
\end{equation}
Then for any $\Lambda,\,\sigma\ge0$, the following inequality holds true
\begin{eqnarray}\label{magn.estimate}
\lefteqn{\mathrm{tr}(\Lambda-H_\omega(A))_+^\sigma\le\left(\frac{1}{\sqrt{1-2\alpha}}+\sup_{n\in\mathbb{N}}\left\{\frac{n}{\sqrt{1-2\alpha}}\right\}\right)} \\ &&
\times\:\mathrm{tr}\left(\Lambda-\left(-\Delta_D^\omega+\frac{1}{x^2+y^2}\left(\int_0^{\sqrt{x^2+y^2}} sB(s)\,\mathrm{d}s\right)^2\right)\right)_+^\sigma\,. \nonumber
\end{eqnarray}
In particular, the estimate (\ref{magn.estimate}) implies
$$
\inf \sigma(H_\omega(A))\ge\inf \sigma\left(-\Delta_D^\omega+\frac{1}{x^2+y^2}\left(\int_0^{\sqrt{x^2+y^2}} sB(s)\,\mathrm{d}s\right)^2\right)\,.
$$
\end{theorem}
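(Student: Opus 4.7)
The plan is a partial-wave analysis combined with a key algebraic inequality that converts the magnetic centrifugal barrier into the non-magnetic one. Since the spectrum of $H_\omega(A)$ depends on $B$ only through its modulus (via gauge and complex conjugation), we may assume $B\ge 0$ and adopt the radial gauge $A(r,\theta)=(a(r)/r)\,\hat\theta$ with $a(r):=\int_0^r sB(s)\,ds$, so that $0\le a(r)\le a(r_0)=\alpha<1/2$. In this gauge $H_\omega(A)=\bigoplus_{m\in\mathbb{Z}}h_m$, where
\[
h_m = -\frac{d^2}{dr^2} - \frac{1}{r}\frac{d}{dr} + \frac{(m-a(r))^2}{r^2}
\]
acts on $L^2((0,r_0),r\,dr)$ with Dirichlet condition at $r_0$, while the comparison operator in (\ref{magn.estimate}) decomposes analogously into operators $\tilde h_m$ with centrifugal term $(m^2+a^2(r))/r^2$ in place of $(m-a(r))^2/r^2$.

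The heart of the argument is the pointwise inequality
\[
(m-a(r))^2 \ge (1-2\alpha)\,m^2 + a^2(r),\qquad m\in\mathbb{Z},\ r\in(0,r_0),
\]
which follows from the identity $(m-a)^2-(1-2\alpha)m^2-a^2 = 2m(\alpha m - a(r))$ by a short case analysis using $0\le a(r)\le\alpha$. Dividing by $r^2$ yields the form inequality $h_m\ge L_m$, where $L_m$ has the same structure as $\tilde h_n$ but with the integer $n$ replaced by the (generally non-integer) parameter $\sqrt{1-2\alpha}\,|m|$. Since the min-max principle shows that the eigenvalues of $-\partial_r^2-r^{-1}\partial_r+(\nu^2+a^2(r))/r^2$ are monotone non-decreasing in $\nu^2$, for every integer $n$ with $n\le\sqrt{1-2\alpha}\,|m|$ we obtain $\mu_j(h_m)\ge\mu_j(\tilde h_n)$. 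Taking $n(m):=\lfloor\sqrt{1-2\alpha}\,|m|\rfloor$, this gives the sector-wise bound
\[
\mathrm{tr}(\Lambda - h_m)_+^\sigma \le \mathrm{tr}(\Lambda - \tilde h_{n(m)})_+^\sigma.
\]

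Summing over $m\in\mathbb{Z}$ and reorganizing by the value of $n=n(m)$ yields $\mathrm{tr}(\Lambda-H_\omega(A))_+^\sigma\le\sum_{n\ge 0}N(n)\,\mathrm{tr}(\Lambda-\tilde h_n)_+^\sigma$, where $N(n):=\#\{m\in\mathbb{Z}:n(m)=n\}$ counts integers $m$ with $|m|$ in an interval of length $1/\sqrt{1-2\alpha}$. An elementary count of $N(n)$, together with the symmetry $\tilde h_n=\tilde h_{-n}$ which makes the comparison sum $\sum_{m\in\mathbb{Z}}\mathrm{tr}(\Lambda-\tilde h_m)_+^\sigma$ count each $\tilde h_n$ ($n\ge 1$) twice, then produces the coefficient $1/\sqrt{1-2\alpha}+\sup_{n\in\mathbb{N}}\{n/\sqrt{1-2\alpha}\}$ appearing in (\ref{magn.estimate}). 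The second assertion of the theorem is an immediate corollary: choosing $\Lambda$ strictly below the infimum of the spectrum of the comparison operator makes the right-hand side vanish, forcing $\inf\sigma(H_\omega(A))\ge\Lambda$.

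The principal obstacle will be the precise combinatorial book-keeping of $N(n)$ needed to match the stated coefficient cleanly; subsidiary technical points such as essential self-adjointness of the radial operators and the $L^1_{\mathrm{loc}}$ integrability of $a^2(r)/r^2$ near the origin (which is automatic from $a(r)=O(r^2)$ as $r\to 0$) should present no real difficulty.
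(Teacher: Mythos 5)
Your strategy is essentially the paper's: partial-wave decomposition of $H_\omega(A)$ into the fiber operators $h_m$, a pointwise lower bound replacing the magnetic centrifugal term $(m-a(r))^2/r^2$ by $(\nu^2+a^2(r))/r^2$ with the reduced effective angular momentum $\nu=\sqrt{1-2\alpha}\,|m|$, monotonicity of the fiber eigenvalues in $\nu^2$, and regrouping of the indices $m$ according to $\lfloor\sqrt{1-2\alpha}\,|m|\rfloor$. One preliminary remark: the reduction ``WLOG $B\ge 0$'' is not justified as stated --- complex conjugation realizes the global sign flip $B\to-B$, not the pointwise replacement $B\to|B|$. What your key inequality actually requires is $0\le a(r)\le\alpha$ for all $r\in(0,r_0)$; the paper uses this tacitly as well, so this is a shared blemish rather than a defect of your argument alone.

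The genuine gap is in the bookkeeping, precisely at the point you flag as ``the principal obstacle'': applying the weakened inequality $(m-a)^2\ge(1-2\alpha)m^2+a^2$ to \emph{all} $m\in\mathbb{Z}$ makes the stated constant unreachable. Write $\beta:=1/\sqrt{1-2\alpha}>1$. Your count sends every $m$ with $|m|<\beta$ to the sector $n=0$, so $N(0)=2\lfloor\beta\rfloor+1\ge 3$, while $\tilde h_0$ occurs only \emph{once} in the comparison sum $\sum_{m\in\mathbb{Z}}\mathrm{tr}(\Lambda-\tilde h_{|m|})_+^\sigma$ (the factor-two symmetry you invoke helps only for $n\ge1$). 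Since the target constant satisfies
\begin{equation*}
\frac{1}{\sqrt{1-2\alpha}}+\sup_{n\in\mathbb{N}}\left\{\frac{n}{\sqrt{1-2\alpha}}\right\}\;\le\;\beta+1\;<\;2\lfloor\beta\rfloor+1
\end{equation*}
for every $\beta>1$, the $n=0$ sector alone defeats the claimed coefficient (take $\Lambda$ between the ground states of $\tilde h_0$ and $\tilde h_1$ to see the bound would actually be false with your accounting). The paper's proof avoids this by treating $m\le 0$ differently: there the cross term $-2m\,a(r)/r^2$ is nonnegative and is simply dropped, giving the \emph{stronger} bound $h_m\ge\tilde h_{|m|}$ with the full integer index $|m|$, so each negative $m$ consumes exactly one of the two copies of $\tilde h_{|m|}$ in the comparison sum. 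Only the $m>0$ sectors are then regrouped; each $\tilde h_n$ with $n\ge1$ receives at most $1+\beta+\{n\beta\}\le 2C$ contributions against its multiplicity $2$, and $\tilde h_0$ receives at most $\lceil\beta\rceil\le C$ (from $m=0$ and $1\le m<\beta$), which is exactly where the fractional-part term in the constant is needed. With that asymmetric treatment of the two half-lattices your argument closes; without it, it does not.
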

\begin{proof}
Let us again employ the partial-wave decomposition (\ref{decomp.}), with the angular component (\ref{hm*}) replaced by
\begin{equation}\label{hm}
h_m:=-\frac{\mathrm{d}^2}{\mathrm{d}r^2}
-\frac{1}{r}\frac{\mathrm{d}}{\mathrm{d}r}
+\left(\frac{m}{r}-\frac{1}{r}\int_0^r sB(s)\,\mathrm{d}s\right)^2,
\end{equation}
and inspect the eigenvalues of this operator. Obviously, for $m\le0$ we have
\begin{equation}\label{h1m}
h_m\ge-\frac{\mathrm{d}^2}{\mathrm{d}r^2}
-\frac{1}{r}\frac{\mathrm{d}}{\mathrm{d}r}
+\frac{m^2}{r^2}+\frac{1}{r^2}\left(\int_0^r sB(s)\,\mathrm{d}s\right)^2,
\end{equation}
while for any $m>0$ we can use the inequality
$$
\frac{2|m|}{r^2}\int_0^r sB(s)\,\mathrm{d}s\le\frac{2m^2}{r^2}\int_0^r sB(s)\,\mathrm{d}s
$$
which in view of the assumption (\ref{assump.}) yields
$$
h_m\ge-\frac{\mathrm{d}^2}{\mathrm{d}r^2}
-\frac{1}{r}\frac{\mathrm{d}}{\mathrm{d}r}
+(1-2\alpha)\frac{m^2}{r^2}+\frac{1}{r^2}\left(\int_0^r sB(s)\,\mathrm{d}s\right)^2.
$$
Next we divide the set of natural numbers into groups such that for all the elements of any fixed group the entire part $\left[\sqrt{1-2\alpha}\,m\right]$ is the same, and we estimate the operator $h_m$ from below by
\begin{equation}\label{tildehm2}
h_m \ge -\frac{\mathrm{d}^2}{\mathrm{d}r^2}
-\frac{1}{r}\frac{\mathrm{d}}{\mathrm{d}r}
+\frac{\left[\sqrt{1-2\alpha}\, m\right]^2}{r^2}+\frac{1}{r^2}\left(\int_0^r sB(s)\,\mathrm{d}s\right)^2.
\end{equation}
Since the number of elements in each group is bounded from above by the sum $\frac{1}{\sqrt{1-2\alpha}} +\sup_{n\in\mathbb{N}}\left\{ \frac{n}{\sqrt{1-2\alpha}}\right\}$, using (\ref{h1m}) and (\ref{tildehm2}) one infers that
\begin{eqnarray*}
\lefteqn{\mathrm{tr}(\Lambda-H_\omega(A))_+^\sigma\le\left(\frac{1}{\sqrt{1-2\alpha}}
+\sup_{n\in\mathbb{N}}\left\{\frac{n}{\sqrt{1-2\alpha}}\right\}\right)} \\ &&
\times\sum_{m=-\infty}^\infty\mathrm{tr}\left(\Lambda-\left(-\frac{\mathrm{d}^2}{\mathrm{d}r^2}
-\frac{1}{r}\frac{\mathrm{d}}{\mathrm{d}r}+\frac{m^2}{r^2}+\frac{1}{r^2}\left(\int_0^r sB(s)\,\mathrm{d}s\right)^2\right)\right)_+^\sigma \\ &&
\hspace{-1em} =\left(\frac{1}{\sqrt{1-2\alpha}}+\sup_{n\in\mathbb{N}}\left\{\frac{n}{\sqrt{1-2\alpha}}\right\}\right)
\\ &&
\times\:\mathrm{tr}\left(\Lambda-\bigoplus_{m=-\infty}^\infty \left(-\frac{\mathrm{d}^2}{\mathrm{d}r^2}
-\frac{1}{r}\frac{\mathrm{d}}{\mathrm{d}r}+\frac{m^2}{r^2}+\frac{1}{r^2}\left(\int_0^r sB(s)\,\mathrm{d}s\right)^2\right)\right)_+^\sigma
\end{eqnarray*}
with any $\sigma,\,\Lambda\ge0$. However, the direct sum in the last expression is nothing else than a partial-wave decomposition of the two-dimensional Schr\"odinger operator with the radial potential $V(r) = \frac{1}{r^2}\left(\int_0^r sB(s)\,\mathrm{d}s\right)^2$ and the Dirichlet condition at the boundary of the disc; this yields the desired claim.
\end{proof}

\section{Application to the three-dimensional case}
\label{3Dapplication}
\setcounter{equation}{0}

Let us return now to our original motivation of estimating eigenvalues due to confinement in a three-dimensional `bottle'. One can employ inequality (\ref{magn.field}) in combination with the results of the previous sections to improve in some cases the spectral bound by taking the magnetic field into account instead of just dropping it.

Let $\Omega\subset\mathbb{R}^3$ with the bounded $x_3$ cross sections. The class of fields to consider are those of the form $B(x)=(B_1(x), B_2(x), B_3(x_3))$, that is, those for which the component $B_3$ perpendicular to the cross section depends on the variable $x_3$ only. Such fields certainly exist, for instance, one can think of the situation when the `bottle' is placed into a homogeneous magnetic field. The field is induced by an appropriate vector potential $A(\cdot):\Omega\to \mathbb{R}^3$,
$$
B(x)=(B_1(x),B_2(x),B_3(x_3))=\mathrm{rot}\,A(x),
$$
and we consider the magnetic Dirichlet Laplacians
$$
\mathcal{H}_\Omega(A)=(i\nabla_x-A(x))^2\quad\text{on}\;\, L^2(\Omega).
$$
We use the notion introduced in Sec.~\ref{s: reduction}. In view of the variational principle we know that the ground-state eigenvalue of $\widetilde{H}_{\omega(x_3)}(\widetilde{A})$ cannot fall below the first Landau level $B_3(x_3)$. Consequently, integrating with respect to $x_3$ in the formula (\ref{magn.field}) one can drop for all the $x_3$ for which $B_3(x_3)\ge\Lambda$. Combining this observation with Remark~\ref{AL-rmk} we get
\begin{eqnarray*}
\lefteqn{\mathrm{tr} (\Lambda-\mathcal{H}_\Omega(A))_+^\sigma\le \frac{\Gamma(\sigma+3/2)\Lambda^{\sigma-1/2}}{4\pi(2\sigma-1)
\Gamma(\sigma-1/2)}\,L_{1,\sigma}^{\mathrm{cl}}
\int_{\{x_3:\, B_3(x_3)<\Lambda\}}|\omega(x_3)|} \\[.5em] && \times\,\biggl[\big(\Lambda^2-B_3(x_3)^2\big)
+2B_3\big(\Lambda-B_3(x_3)\big)\,\left\{\frac{\Lambda+B_3}{2B_3}
\right\}\biggr]\,\mathrm{d}x_3
\end{eqnarray*}
for any $\sigma\ge3/2$.

\begin{example}
(circular cross section)
{\rm Let $\Omega$ be a three-dimensional cusp with a circular cross section $\omega(x_3)$ of radius $r(x_3)$ such that $r(x_3)\to0$ as $x_3\to\infty$.  Then the above formula in combination with Theorem~\ref{thm:homdisc} yields
\begin{eqnarray*}
\lefteqn{\mathrm{tr} (\Lambda-\mathcal{H}_\Omega(A))_+^\sigma\le L_{1,\sigma}^{\mathrm{cl}}\sum_{m\in\mathbb{Z}, \,k\in\mathbb{N}}\int_{\mathbb{R}}\biggl(\Lambda-B_3(x_3)} \\[.2em] && -B_3(x_3)\biggl(|m|-m-2a_{|m|, \sqrt{B_3(x_3)} r_0(x_3)/\sqrt{2}}^k\biggr)\biggr)_+^{\sigma+1/2}\,\mathrm{d}x_3
\end{eqnarray*}
for any $\sigma\ge3/2$. The particular case $B(x)=\{0,0,B\}$ applies to a cusp-shaped region placed to a homogeneous field parallel to the cusp axis.}
\end{example}
\begin{example}
(radial magnetic field)
{\rm Consider the same cusp-shaped region $\Omega$ in the more general situation when the third field component can depend on the radial variable, $B(x)=(B_1(x), B_2(x), B_3(x_1^2+x_2^2, x_3))$, assuming that
$$
\sup_{x_3\in\mathbb{R}} \alpha(x_3)=\sup_{x_3\in\mathbb{R}} \int_0^{r_0(x_3)} sB_3(s, x_3)\,\mathrm{d}s<\frac{1}{2}\,.
$$
Then the dimensional reduction in view of Theorem~\ref{thm:raddisc} gives
\begin{eqnarray*}
\lefteqn{\mathrm{tr} (\Lambda-\mathcal{H}_\Omega(A))_+^\sigma\le L_{1,\sigma}^{\mathrm{cl}}\int_{\mathbb{R}}
\left(\frac{1}{\sqrt{1-2\alpha(x_3)}}
+\sup_{n\in\mathbb{N}}\left\{\frac{n}{\sqrt{1-2\alpha(x_3)}}\right\}\right)} \\ && \times\,\mathrm{tr}\left(\Lambda-\left(-\Delta_D^{\omega(x_3)}
+\frac{1}{x_1^2+x_2^2}\left(\int_0^{\sqrt{x_1^2+x_2^2}} sB_3(s, x_3)\,\mathrm{d}s\right)^2\right)\right)_+^{\sigma+1/2}
\end{eqnarray*}
for any $\sigma\ge3/2$.}
\end{example}

\section{Spectral estimates for eigenvalues from \\ perturbed magnetic field}
\label{s:mgBerezin}
\setcounter{equation}{0}

Now we change the topic and consider situations when the discrete spectrum comes from the magnetic field alone. We are going to demonstrate a Berezin-type estimate for the magnetic Laplacian on $\mathbb{R}^2$ with the field which is a radial and local perturbation of a homogeneous one. We consider the operator $H(B)$ in $L^2(\mathbb{R}^2)$ defined as follows,
\begin{equation} \label{eq-HB}
H(B)= -\partial_x^2 +(i\partial_y+A_2)^2, \qquad  A = \big(0,B_0\,x- f(x,y)\big)\,,
\end{equation}
with $f$ given by
$$ 
f(x, y)=-\int_x^\infty g(\sqrt{t^2+y^2})\,\mathrm{d}t\,.
$$ 
with $g: \mathbb{R}_+\to\mathbb{R}_+$; the operator $H(B)$ is then associated with the magnetic field
$$ 
B= B(x,y) = B_0 - g(\sqrt{x^2+y^2}\,)\,.
$$ 
Since have chosen the vector potential in such a way that the unperturbed  part corresponds to the Landau gauge, we have
$$ 
H(B_0)=-\partial_x^2 +(i\partial_y+B_0 x)^2.
$$ 
Using a partial Fourier transformation, it is easy to conclude from here that the corresponding spectrum consists of identically spaced eigenvalues of infinite multiplicity, the Landau levels,
\begin{equation} \label{sp-hb0}
\sigma (H(B_0)) =\left\{(2n-1)B_0, \ \ n\in\mathbb{N}\, \right\} \, .
\end{equation}
It is well known that $\inf\sigma_{\mathrm{ess}}(H(B)-B)=0$, hence the relative compactness of $B_0-B$ with respect to $H(B)-B_0$ in $L^2(\mathbb{R}^2)$ implies
$$ 
\inf\sigma_{\mathrm{ess}}(H(B)) = B_0.
$$ 
We have to specify the sense in which the magnetic perturbation is local. In the following we will suppose that
\begin{enumerate}[(i)]
\item the function $g\in L^\infty(\mathbb{R}_+)$ is non-negative and such that both $f$ and $\partial_{x_2} f$ belong to $L^\infty(\mathbb{R}^2)$, and
$$
\lim_{x_1^2+x_2^2\to\infty} \big(|\partial_{x_2} f(x_1, x_2)| + |f(x_1, x_2)|\big) =0\,.
$$ 
\item $\|g\|_\infty \leq B_0\,.$
\end{enumerate}
Let us next rewrite the vector potentials $A_0$ and $A$ associated to $B_0$ and $B$ in the polar coordinates. Passing to the circular gauge we obtain
\begin{equation} \label{poincare}
A_0=(0, a_0(r))\,, \qquad   A=(0, a(r))\,,
\end{equation}
with
\begin{equation}\label{a-a0}
a_0(r) = \frac{B_0 r}{2}\,, \qquad a(r) = \frac{B_0 r}{2} - \frac{1}{r} \int_0^r g(s)\, s\, \mathrm{d}s\,.
\end{equation}
Hence the operators $H(B_0)$ and $H(B)$ are associated with the closures of the quadratic forms in $L^2(\mathbb{R}_+, r \mathrm{d}r)$ with the values
\begin{equation} \label{q-b-0}
Q(B_0)[u] = \int_0^{2\pi} \int_0^\infty \left ( |\partial_r u|^2 + |\, i  r^{-1} \partial_\theta u+a_0(r)\, u|^2\right)\, r\, \mathrm{d}r\, \mathrm{d}\theta
\end{equation}
and
\begin{equation} \label{q-b}
Q(B)[u] = \int_0^{2\pi} \int_0^\infty \left ( |\partial_r u|^2 + |\, i  r^{-1} \partial_\theta u+a(r)\, u|^2\right)\, r\, \mathrm{d}r\,\mathrm{d}\theta\,,
\end{equation}
respectively, both defined on $C_0^\infty(\mathbb{R}_+)$. Furthermore,
for every $k\in\mathbb{N}_0$ we introduce the following auxiliary potential,
\begin{equation} \label{vk}
 \qquad V_k(r) := \frac{2k}{r} (a_0(r)-a(r)) +a^2(r)-a_0^2(r)\,,
\end{equation}
and the functions
\begin{equation} \label{psi-k}
\psi_k(r) = \sqrt{\frac{B_0}{\Gamma(k+1)}}\ \left(\frac{B_0}{2}\right)^{k/2}\, r^k\, \exp \left( -\frac{B_0\, r^2}{4}\right).
\end{equation}
Finally let us denote by
\begin{equation} \label{flux.}
\alpha =  \int_0^\infty g(r)\, r\, \mathrm{d}r
\end{equation}
the flux associated with the perturbation; recall that in the rational units we employ the flux quantum value is $2\pi$. Now we are ready to state the result.

\begin{theorem} \label{thm-red}
Let the assumptions (i) and (ii) be satisfied, and suppose moreover that $\alpha \leq 1$. Put
\begin{equation} \label{lam-k.}
\Lambda_k =    \left(\psi_k, \big(\,V_k(\,\cdot)\big)_- \, \psi_k\right)_{L^2(\mathbb{R}_+, r \mathrm{d}r)}\,.
\end{equation}
Then the inequality
\begin{equation} \label{eq-LT-gen-2d}
\mathrm{tr} (H(B)-B_0)_-^\gamma \ \leq  2^\gamma \sum_{k=0}^\infty\ \Lambda_k^\gamma\,,  \qquad \gamma \ge 0\,,
\end{equation}
holds true whenever the right-hand side is finite.
\end{theorem}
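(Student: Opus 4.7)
My plan is to decompose the two-dimensional problem by angular momentum in the circular gauge and then derive a per-sector Lieb--Thirring bound exploiting the explicit ground state of the Landau Hamiltonian.

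First, gauge invariance permits passing from \eqref{eq-HB} to the circular gauge \eqref{poincare}--\eqref{a-a0}. The radial symmetry of both $a_0$ and $a$ yields
\[
L^2(\mathbb{R}^2) \cong \bigoplus_{k\in\mathbb{Z}} L^2(\mathbb{R}_+, r\,\mathrm{d}r), \qquad H(B_0)=\bigoplus_k h_k^0, \quad H(B)=\bigoplus_k h_k,
\]
with fibers
\[
h_k^0 = -\partial_r^2-\tfrac{1}{r}\partial_r+\Bigl(\tfrac{k}{r}-a_0\Bigr)^2, \qquad h_k = -\partial_r^2-\tfrac{1}{r}\partial_r+\Bigl(\tfrac{k}{r}-a\Bigr)^2,
\]
and a short algebraic calculation gives $h_k-h_k^0=V_k$, matching \eqref{vk}. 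The radial harmonic-oscillator factorization yields $\sigma(h_k^0)=\{B_0(2n+|k|-k+1):n\in\mathbb{N}_0\}$: for $k\ge 0$ the lowest eigenvalue is $B_0$, with normalized eigenfunction $\psi_k$ from \eqref{psi-k} and a spectral gap of $2B_0$ to the next level, while for $k<0$ one has $h_k^0-B_0\ge 2B_0$.

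Second, the negative-$k$ sectors have to be disposed of. Here I would combine the gap $h_k^0-B_0\ge 2B_0$ with the hypotheses $\|g\|_\infty\le B_0$ and $\alpha\le 1$ and with the centrifugal term $k^2/r^2$ contained in $h_k^0$ to show that $V_k$ is not strong enough to push any spectrum below $B_0$; consequently $h_k-B_0\ge 0$ on each such sector and they contribute nothing to the left-hand side of \eqref{eq-LT-gen-2d}.

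Third, for $k\ge 0$ I would establish the per-sector estimate $\mathrm{tr}(h_k-B_0)_-^\gamma\le(2\Lambda_k)^\gamma$; summing over $k\in\mathbb{N}_0$ yields the desired inequality. With $P_k=|\psi_k\rangle\langle\psi_k|$, a block decomposition of $h_k-B_0$ with respect to $\mathrm{Ran}(P_k)\oplus\mathrm{Ran}(P_k^\perp)$ and the Feshbach--Schur principle show that $h_k-B_0$ has at most one eigenvalue below $0$, and that such an eigenvalue $-E$ is determined by the zero of the scalar
\[
\sigma_k(E) = (\psi_k,V_k\psi_k) + E - \bigl(V_k\psi_k,\, P_k^\perp(h_k^0-B_0+V_k+E)^{-1}P_k^\perp V_k\psi_k\bigr).
\]
The gap $h_k^0-B_0\ge 2B_0\,P_k^\perp$ together with the pointwise estimate $(V_k)_-\le h\,(a_0+a)\le B_0$ (which follows from $\alpha\le 1$) gives $P_k^\perp(h_k^0-B_0+V_k+E)P_k^\perp\ge B_0\,P_k^\perp$ for $E\ge 0$; the resolvent correction in $\sigma_k(E)$ is then bounded above by $B_0^{-1}\|(V_k)_-\|_\infty \Lambda_k\le \Lambda_k$, and together with $(\psi_k,V_k\psi_k)\ge-\Lambda_k$ this forces $E\le 2\Lambda_k$.

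The main obstacle is the Schur-complement estimate of step three. The naive variational argument with the test vector $\psi_k$ only yields the upper bound $\inf\sigma(h_k-B_0)\le(\psi_k,V_k\psi_k)$ on the lowest eigenvalue, whereas the required upper bound on its absolute value demands exploiting the $2B_0$-gap to absorb the off-diagonal couplings between $\mathrm{Ran}(P_k)$ and its orthogonal complement. The factor $2$ arises from balancing the diagonal contribution $(\psi_k,V_k\psi_k)\ge-\Lambda_k$ against the off-diagonal correction, and the hypothesis $\alpha\le 1$ is used precisely to keep both terms bounded by $\Lambda_k$.
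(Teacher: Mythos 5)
Your architecture matches the paper's: circular-gauge partial-wave decomposition, elimination of the sectors $k<0$ through the oscillator gap together with $g\le B_0$ and $\alpha\le1$, and a per-sector bound $\mathrm{tr}\,(h_k(B)-B_0)_-^\gamma\le(2\Lambda_k)^\gamma$ for $k\ge0$. The per-sector step is where you genuinely diverge. The paper never inverts anything: it writes $h_k(B)-B_0\ge(\Pi_k+Q_k)\bigl(h_k(B_0)-B_0-(V_k)_-\bigr)(\Pi_k+Q_k)$ and removes the off-diagonal terms with the elementary inequality $\Pi_k(V_k)_-Q_k+Q_k(V_k)_-\Pi_k\le\Pi_k(V_k)_-\Pi_k+Q_k(V_k)_-Q_k$, arriving at a block-diagonal lower bound in which the potential is doubled; the $Q_k$-block is then nonnegative because of the $2B_0$ gap and $\sup_r(V_k)_-\le\alpha B_0\le B_0$, and the $\Pi_k$-block is the scalar $-2\Lambda_k$. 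That is where the factor $2^\gamma$ comes from, with no resolvent and no counting of eigenvalues.

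Your Feshbach--Schur route has one genuine gap as written. In your scalar $\sigma_k(E)$ the off-diagonal coupling is $P_k^\perp V_k\psi_k$ with the \emph{full} potential, and the claimed bound of the resolvent correction by $B_0^{-1}\|(V_k)_-\|_\infty\Lambda_k$ would require $\|P_k^\perp V_k\psi_k\|^2\le\|(V_k)_-\|_\infty\,(\psi_k,(V_k)_-\psi_k)$. This fails: $\|V_k\psi_k\|^2$ contains $\int_0^\infty(V_k)_+^2\,\psi_k^2\, r\,\mathrm{d}r$, and by \eqref{vk}, writing $F(r)=\int_0^r g(s)s\,\mathrm{d}s$, one has $V_k=\frac{2kF}{r^2}-B_0F+\frac{F^2}{r^2}$, whose positive part is of order $kB_0$ near the origin; nothing ties it to $\Lambda_k$, which for rapidly decaying $g$ is small. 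The repair is to invoke the form inequality $h_k(B)-B_0\ge h_k(B_0)-B_0-(V_k)_-$ \emph{before} the block decomposition and run the Feshbach--Schur argument for the right-hand side: the diagonal entry is exactly $-\Lambda_k+E$, the coupling becomes $P_k^\perp(V_k)_-\psi_k$, and your chain $C(E)\le B_0^{-1}\|(V_k)_-\psi_k\|^2\le B_0^{-1}\|(V_k)_-\|_\infty\Lambda_k\le\alpha\Lambda_k\le\Lambda_k$ is then valid, giving $E=\Lambda_k+C(E)\le2\Lambda_k$ for the unique negative eigenvalue. With that insertion your argument closes; it is correct but not shorter than the paper's purely variational one, whose projection inequality sidesteps both the invertibility of the $P_k^\perp$-block and the single-eigenvalue count.
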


\begin{remark}
{\rm For a detailed discussion of the asymptotic distribution of eigenvalue of the operator $H(B)$ we refer to \cite{rt08}.}
\end{remark}

\begin{proof}
We are going to employ the fact that both $A_0$ and $A$ are radial functions, see \eqref{poincare}, and note that by the partial-wave decomposition
\begin{equation} \label{pwd-mg}
\mathrm{tr} \, (H(B)-B_0)_-^\gamma = \sum_{k\in\mathbb{Z}}\, \mathrm{tr} \, (h_k(B)-B_0)_-^\gamma\,,
\end{equation}
where the operators $h_k(B)$ in $L^2(\mathbb{R}_+, r \mathrm{d}r)$ are associated with the closures of the quadratic forms
$$
Q_k[u] = \int_0^\infty \left ( |\partial_r u|^2 + \biggl|\frac{k}{r} u-a(r)\, u\biggr|^2\right)\, r\, \mathrm{d}r \,,
$$
defined originally on $C_0^\infty(\mathbb{R}_+)$, and acting on their domain as
$$ 
h_k(B) = -\partial_r^2 -\frac 1r \partial_r + \left(\frac kr  -a(r)\right)^2.
$$ 
In view of \eqref{vk} it follows that
$$
h_k(B) = h_k(B_0) + V_k(r)\,,
$$
where
$$
h_k(B_0) = -\partial_r^2 -\frac 1r \partial_r + \left(\frac kr  -a_0(r)\right)^2 \,.
$$
To proceed we need to recall some spectral properties of the two-dimensional harmonic oscillator,
$$
{\rm H}_\mathrm{osc} = -\Delta +\frac{B_0^2}{4}\, (x^2+y^2) \qquad \text{in} \;\, L^2(\mathbb{R}^2)\,.
$$
It is well known that the spectrum of ${\rm H}_\mathrm{osc}$ consists of identically spaced eigenvalues of a finite multiplicity,
\begin{equation} \label{h-osc-spec}
\sigma\bigr({\rm H}_\mathrm{osc}\bigl) = \left\{\, nB_0, \ n\in\mathbb{N} \,\right\} \,,
\end{equation}
where the first eigenvalue $B_0$ is simple and has a radially symmetric eigenfunction. The latter corresponds to the term with $k=0$ in the partial-wave decomposition of ${\rm H}_\mathrm{osc}$, which implies
$$ 
\sigma\bigl({\rm H}_\mathrm{osc}\bigr)  = \bigcup_{k\in\mathbb{Z}}\  \sigma \left(-\partial_r^2 -\frac 1r \partial_r + \frac{k^2}{r^2} +\frac{B_0^2\, r^2}{4}\right)\,,
$$ 
where the operators in the brackets at the right-hand side act in $L^2(\mathbb{R}_+, r \mathrm{d}r)$. Hence in view of \eqref{h-osc-spec} we have
\begin{equation} \label{k-neq-0}
\inf_{k\neq 0} \sigma \left(-\partial_r^2 -\frac 1r \partial_r + \frac{k^2}{r^2} +\frac{B_0^2\, r^2}{4}\right) \, \geq 2 B_0\, .
\end{equation}
On the other hand, for $k<0$ it follows from (ii), \eqref{vk} and \eqref{flux.} that
$$
V_k(r) = \frac{2k}{r}\, \int_0^r g(s)\, s\, \mathrm{d}s -B_0\, \int_0^r g(s)\, s\, \mathrm{d}s + \frac{1}{r^2} \left(\int_0^r g(s)\, s\, \mathrm{d}s\right)^{\!2}  \\
\geq k B_0 - B_0\,.
$$
By \eqref{k-neq-0} we thus obtain the following inequality which holds in the sense of quadratic forms on $C_0^\infty(\mathbb{R}_+)$ for any $k<0$,
\begin{eqnarray*}
\lefteqn{h_k(B) = h_k(B_0) + V_k(r) = -\partial_r^2 -\frac 1r \partial_r + \frac{k^2}{r^2} +\frac{B_0^2\, r^2}{4} - k B_0 +V_k(r)} \\ &&
\geq  -\partial_r^2 -\frac 1r \partial_r + \frac{k^2}{r^2} +\frac{B_0^2\, r^2}{4} -\alpha\, B_0 \\ &&
\geq (2-\alpha) B_0\,. \phantom{AAAAAAAAAAAAAAAAAAAAAAAAAAAA}
\end{eqnarray*}
Since $\alpha \leq 1$ holds by hypothesis, this implies that
\begin{equation} \label{k-pos}
\mathrm{tr} \, (H(B)-B_0)_-^\gamma = \sum_{k\in\mathbb{Z}}\, \mathrm{tr} \, (h_k(B)-B_0)_-^\gamma\,  = \sum_{k\geq 0}\, \mathrm{tr} \, (h_k(B)-B_0)_-^\gamma\, ,
\end{equation}
see \eqref{pwd-mg}. In order to estimate $\mathrm{tr} \, (h_k(B)-B_0)_-^\gamma$ for $k\geq 0$ we employ
$$
\Pi_k = \left( \cdot\, , \, \psi_k\right)_{L^2(\mathbb{R}_+, r \mathrm{d}r)}\, \psi_k\,,
$$
the projection onto the subspace spanned by $\psi_k$, and note that
\begin{equation} \label{norm}
\psi_k \in \ker (h_k(B_0)-B_0)\,, \qquad \|\psi_k\|_{L^2(\mathbb{R}_+, r \mathrm{d}r)} = 1 \quad \forall\ k\in\mathbb{N}\cup\{0\} \,.
\end{equation}
Let $Q_k = 1-\Pi_k$. From the positivity of $\bigl(\,V_k(\cdot)\bigr)_-$ it follows  that for any $u\in C_0^\infty(\mathbb{R}_+)$  it holds
\begin{eqnarray}
\lefteqn{\left( u, \left( \Pi_k \bigl(\,V_k(\cdot)\bigr)_- Q_k +  Q_k \bigl(\,V_k(\cdot)\bigr)_-\Pi_k\right)\, u\right)} \nonumber \\ && \leq \, \left( u,  \Pi_k \bigl(\,V_k(\cdot )\bigr)_- \Pi_k\, u\right) + \left( u,  Q_k \big(\,V_k(\cdot )\big)_- Q_k\, u\right)\,, \label{pq}
\end{eqnarray}
where the scalar products are taken in $L^2(\mathbb{R}_+, r \mathrm{d}r)$. From \eqref{pq} we infer that
\begin{eqnarray}
\lefteqn{h_k(B)-B_0 =  (\Pi_k+Q_k) \left(h_k(B_0) -B_0+V_k(\cdot)\right) (\Pi_k+Q_k)}
\nonumber \\
&& \geq (\Pi_k+Q_k) \left(h_k(B_0) -B_0-\bigl(\,V_k(\cdot)\bigr)_-\right) (\Pi_k+Q_k) \nonumber \\
&& \geq   \Pi_k \left(h_k(B_0) -B_0-2 \bigl(\,V_k(\cdot )\bigr)_-\right) \Pi_k \nonumber\\
&& \quad + Q_k \left(h_k(B_0) -B_0-2 \bigl(\,V_k(\cdot )\bigr)_-\right) Q_k\,. \label{pk-only}
\end{eqnarray}
The operator $h_k(B_0)$ has for each $k\in\mathbb{N}_0$ discrete spectrum which consists of simple eigenvalues. Moreover, from the partial-wave decomposition of the operator $H(B_0)$ we obtain
$$
\sigma (H(B_0)) =\left\{(2n-1)B_0, \ n\in\mathbb{N} \right\} \, = \bigcup_{k\in\mathbb{Z}} \, \sigma(h_k(B_0))\,,
$$
see \eqref{sp-hb0}. It means that
$$
\forall\ k \in\mathbb{Z}\ : \quad \sigma(h_k(B_0)) \, \subset\, \left\{(2n-1)B_0, \ \ n\in\mathbb{N} \right\}\,,
$$
and since $\psi_k$ is an eigenfunction of $h_k(B_0)$ associated to the simple eigenvalue $B_0$, see \eqref{norm}, it follows that
\begin{equation}   \label{higher-ev}
Q_k \left(h_k(B_0) -B_0\right) Q_k \, \geq\, 2 B_0\, Q_k\,, \qquad \forall\ k\in\mathbb{N}\cup\{0\}\,.
\end{equation}
On the other hand, by \eqref{vk} and \eqref{flux.} we infer
$$
\sup_{r>0} \bigl(\,V_k(r)\bigr)_- \, \leq\, \alpha\, B_0 \qquad \forall\ k\in\mathbb{N}\cup\{0\}.
$$
The last two estimates thus imply that
$$
Q_k \left(h_k(B_0) -B_0-2 \bigl(\,V_k(\cdot )\bigr)_-\right) Q_k \geq
Q_k \left( 2\, B_0 (1-\alpha)\right) Q_k  \geq 0\,,
$$
where we have used the assumption $\alpha \leq 1$. With the help of \eqref{pk-only} and the variational principle we then conclude that
\begin{eqnarray*}
\lefteqn{\mathrm{tr} \, (h_k(B)-B_0)_-^\gamma\,\leq\, \mathrm{tr} \left(\Pi_k \left(h_k(B_0) -B_0-2 \bigl(\,V_k(\cdot )\bigr)_-\right) \Pi_k\right)_-^\gamma} \\
&& = \mathrm{tr} \left(-2\, \Pi_k \bigl(\,V_k(\cdot )\big)_- \Pi_k\right)_-^\gamma = 2^\gamma\, \mathrm{tr} \left( \Pi_k \bigl(\,V_k(\cdot )\bigr)_- \Pi_k\right)^\gamma \\
&& = 2^\gamma\, \left(\psi_k, \bigl(\,V_k(\cdot )\bigr)_- \, \psi_k\right)^\gamma_{L^2(\mathbb{R}_+, r \mathrm{d}r)}\,
= 2^\gamma\,  \Lambda_k^\gamma\,, \phantom{AAAAAA}
\end{eqnarray*}
see \eqref{lam-k.}. To complete the proof it now remains to apply equation \eqref{k-pos}.
\end{proof}

\section{Three dimensions: a magnetic `hole'}
\setcounter{equation}{0}
\label{s:3Dhole}

Let us return to the three-dimensional situation and consider a magnetic Hamiltonian $\mathcal{H}(B)$ in $L^2(\mathbb{R}^3)$ associated to the magnetic field \mbox{$B :\mathbb{R}^3\to \mathbb{R}^3$} regarded as a perturbation of a homogeneous magnetic field of intensity \mbox{$B_0>0$} pointing in the $x_3$-direction,
\begin{equation} \label{mgf-3d}
B(x_1, x_2, x_3)=  (0, 0, B_0) - b(x_1, x_2, x_3)\,,
\end{equation}
with the perturbation $b$ of the form
$$ 
b(x_1, x_2, x_3) = \biggl(-\omega'(x_3)\, f(x_1, x_2),\,  0,\,  \omega(x_3)\, g\left(\sqrt{x_1^2+x_2^2}\, \right) \biggr)\,.
$$ 
Here $\omega: \mathbb{R}\to \mathbb{R}_+$ , $g: \mathbb{R}_+\to \mathbb{R}_+$ and
\begin{equation} \label{fgh2}
f(x_1, x_2) = -\int_{x_1}^\infty g\left(\sqrt{t^2+x_2^2}\, \right)\,  \mathrm{d}t\,.
\end{equation}
The resulting field $B$ thus has the component in the $x_3$-direction given the $B_0$ plus a perturbation which is a radial field in the $x_1, x_2-$plane with a $x_3-$dependent amplitude $\omega(x_3)$.  The first component of $B$ then ensures that $\nabla\cdot B=0$, which is required by the Maxwell equations which include no magnetic monopoles; it vanishes if the field is $x_3$-independent.

A vector potential generating this field can be chosen in the form
$$ 
A(x_1, x_2, x_3) = (0,\, B_0\, x_1 -\omega(x_3)\, f(x_1, x_2),\, 0 )\,,
$$ 
which reduces to Landau gauge in the unperturbed case, and consequently, the operator $\mathcal{H}(B)$ acts on its domain as
\begin{equation} \label{eq-H}
\mathcal{H}(B) = -\partial_{x_1}^2 +(i\partial_{x_2} +B_0\, x_1 -\omega(x_3)\, f(x_1, x_2) )^2 -\partial_{x_3}^2.
\end{equation}
We have again to specify the local character of the perturbation: we will suppose that
\begin{enumerate}[(i)]
\item the function $g\in L^\infty(\mathbb{R}_+)$ is non-negative, such that $f$ and $\partial_{x_2} f$ belong to $L^\infty(\mathbb{R}^2)$, and
$$ 
\lim_{x_1^2+x_2^2\to\infty} \big(|\partial_{x_2} f(x_1, x_2)| + |f(x_1, x_2)|\big) =0\,,
$$ 
\item $\omega\geq 0$, $\, \omega\in L^2(\mathbb{R})\cap L^\infty(\mathbb{R})$, and
$$ 
\|\omega\|_\infty\, \|g\|_\infty \leq B_0\,, \qquad   \lim_{|x_3|\to\infty}  \omega(x_3)  =0\,.
$$ 
\end{enumerate}

\begin{lemma}\label{lem-es}
The assumptions (i) and (ii) imply $\sigma_{\mathrm{ess}}(\mathcal{H}(B)) = [B_0,\infty)$.
\end{lemma}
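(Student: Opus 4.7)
The plan is to write $\mathcal{H}(B) = \mathcal{H}_0 + W$, where
\[
\mathcal{H}_0 := -\partial_{x_1}^2 + (i\partial_{x_2}+B_0\, x_1)^2 -\partial_{x_3}^2
\]
is the operator obtained by switching off the perturbation $b$. Since $\mathcal{H}_0$ decomposes along the tensor factorization $L^2(\mathbb{R}^3)=L^2(\mathbb{R}^2_{x_1,x_2})\otimes L^2(\mathbb{R}_{x_3})$ as $H(B_0)\otimes I + I\otimes(-\partial_{x_3}^2)$, combining (\ref{sp-hb0}) with $\sigma(-\partial_{x_3}^2)=[0,\infty)$ yields $\sigma(\mathcal{H}_0)=\sigma_{\mathrm{ess}}(\mathcal{H}_0)=[B_0,\infty)$. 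Expanding the squares in (\ref{eq-H}) gives
\[
W = -2\,\omega(x_3)\, f(x_1,x_2)\,\Pi - i\,\omega(x_3)\, \partial_{x_2}f(x_1,x_2) + \omega^2(x_3)\, f^2(x_1,x_2)\,,
\]
where $\Pi := i\partial_{x_2}+B_0\, x_1$. By hypotheses (i)--(ii), the coefficient of each summand is bounded on $\mathbb{R}^3$ and tends to zero as $|x|\to\infty$.

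For the inclusion $[B_0,\infty)\subseteq \sigma_{\mathrm{ess}}(\mathcal{H}(B))$ I would construct a singular Weyl sequence at each $\lambda=B_0+k^2$, $k\ge 0$. Pick any normalized $\phi_0$ in the infinite-dimensional lowest Landau level $\ker(H(B_0)-B_0)\subset L^2(\mathbb{R}^2)$ and a compactly supported $\eta\in C_0^\infty(\mathbb{R})$ with $\|\eta\|_2=1$, and set
\[
\phi_n(x_1,x_2,x_3) = \phi_0(x_1,x_2)\,e^{ikx_3}\,n^{-1/2}\,\eta\bigl((x_3-n^2)/n\bigr)\,.
\]
A direct calculation using the scaling gives $\|\phi_n\|=1$ and $\|(-\partial_{x_3}^2-k^2)[n^{-1/2}e^{ikx_3}\eta((\,\cdot\,-n^2)/n)]\|_{L^2(\mathbb{R})}=O(n^{-1})$, hence $(\mathcal{H}_0-\lambda)\phi_n\to 0$. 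Simultaneously the $x_3$-support of $\phi_n$ escapes to $+\infty$, on which $|\omega|$ is uniformly small by (ii); the only non-trivial term in $W\phi_n$ is $\omega f\,\Pi\phi_n$, bounded by $\|f\|_\infty\|\Pi\phi_0\|_{L^2(\mathbb{R}^2)}\,\sup_{x_3\ge n^2}|\omega(x_3)|$, and $\Pi\phi_0\in L^2$ because $\phi_0$ lies in the form domain of $H(B_0)$. Finally $\phi_n\rightharpoonup 0$ weakly in $L^2(\mathbb{R}^3)$ as a moving, dispersing wave packet in $x_3$, so $\{\phi_n\}$ is the desired Weyl sequence.

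The opposite inclusion $\sigma_{\mathrm{ess}}(\mathcal{H}(B))\subseteq [B_0,\infty)$ I would obtain by showing that $W$ is form-compact relative to $\mathcal{H}_0$, i.e.\ that $(\mathcal{H}_0+c)^{-1/2}\,W\,(\mathcal{H}_0+c)^{-1/2}$ is compact on $L^2(\mathbb{R}^3)$ for some $c>0$; the form version of Weyl's theorem on stability of the essential spectrum then gives $\sigma_{\mathrm{ess}}(\mathcal{H}(B))=\sigma_{\mathrm{ess}}(\mathcal{H}_0)=[B_0,\infty)$. The two multiplication summands $-i\omega\,\partial_{x_2}f$ and $\omega^2 f^2$ are bounded functions on $\mathbb{R}^3$ vanishing at infinity, and for any such $V$ the sandwich $(\mathcal{H}_0+c)^{-1/2}\,V\,(\mathcal{H}_0+c)^{-1/2}$ is compact: approximate $V$ in $L^\infty$ by compactly supported cut-offs and invoke the Rellich--Kondrachov theorem on the elliptic form domain of $\mathcal{H}_0$. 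The main obstacle is the first-order term $-2\omega f\,\Pi$: I would factor
\[
(\mathcal{H}_0+c)^{-1/2}\,(\omega f)\,\Pi\,(\mathcal{H}_0+c)^{-1/2} \,=\, \bigl[(\mathcal{H}_0+c)^{-1/2}(\omega f)\bigr]\,\bigl[\Pi\,(\mathcal{H}_0+c)^{-1/2}\bigr]\,,
\]
note that the right bracket is bounded by virtue of the form estimate $\Pi^2\le\mathcal{H}_0$, and reduce compactness of the left bracket to the same vanishing-multiplication argument used for the two lower-order terms.
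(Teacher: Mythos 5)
Your proposal is correct, and its core is the same as the paper's: identify the perturbation $W=\mathcal{H}(B)-\mathcal{H}(B_0)=-2\omega f\,\Pi-i\omega\,\partial_{x_2}f+\omega^2f^2$, show it is relatively compact with respect to $\mathcal{H}(B_0)$, and invoke Weyl's theorem. Two remarks on where you diverge. First, your explicit singular Weyl sequence for the inclusion $[B_0,\infty)\subseteq\sigma_{\mathrm{ess}}(\mathcal{H}(B))$ is sound but redundant: once relative compactness is established, Weyl's theorem gives \emph{equality} of the essential spectra, so both inclusions come for free; the paper accordingly proves only the compactness statement. Second, the technical implementation of compactness differs. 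The paper works with operator-relative compactness, establishing compactness of $(\omega\,\partial_{x_2}f+\omega^2f^2)(-\Delta+1)^{-1}$ via \cite[Thm.~5.7.1]{da} and then transferring it to the magnetic resolvent through the diamagnetic inequality and Pitt's theorem \cite{pi}; for the first-order term it uses that $\Pi$ commutes with $\mathcal{H}(B_0)$ to split the resolvent as $\omega f\,(\mathcal{H}(B_0)+1)^{-1/2}\,\Pi\,(\mathcal{H}(B_0)+1)^{-1/2}$. You instead use form-relative compactness, so the sandwich $(\mathcal{H}_0+c)^{-1/2}\,(\omega f)\,\Pi\,(\mathcal{H}_0+c)^{-1/2}$ factors without any commutation argument, with boundedness of $\Pi(\mathcal{H}_0+c)^{-1/2}$ from $\Pi^2\le\mathcal{H}_0$, and you get local compactness from Rellich--Kondrachov on the magnetic form domain (which is legitimate since the vector potential is locally bounded, making the magnetic and ordinary $H^1$-norms locally equivalent). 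Both routes are standard and complete; yours avoids the commutation step and the Pitt machinery at the cost of invoking the form version of Weyl's theorem, while the paper's stays entirely at the operator level.
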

\begin{proof}
We will show that the essential spectrum of $\mathcal{H}(B)$ coincides with the essential spectrum of the operator
$$ 
\mathcal{H}(B_0) = -\partial_{x_1}^2 +(i\partial_{x_2} +B_0\, x_1)^2 -\partial_{x_3}^2\,,
$$ 
which is easy to be found, we have $\sigma(\mathcal{H}(B_0))= \sigma_{\mathrm{ess}}(\mathcal{H}(B_0)) = [B_0,\infty)$. Let
$$
T= \mathcal{H}(B) -\mathcal{H}(B_0) = -2\, \omega f\, (i\partial_{x_2} + B_0 x_1) -i \omega\, \partial_{x_2} f +\omega^2 f^2.
$$
From assumption (i) in combination with \cite[Thm.~5.7.1]{da} it follows that the operator $(\omega\, \partial_{x_2} f +\omega^2 f^2 )(-\Delta+1)^{-1}$ is compact on $L^2(\mathbb{R}^3)$. The diamagnetic inequality and \cite{pi} thus imply that the sum $i \omega\, \partial_{x_2} f +\omega^2 f^2$ is relatively compact with respect to $\mathcal{H}(B_0)$.

As for the first term of the perturbation $T$, we note that since $(i\partial_{x_2} + B_0 x_1)$ commutes with $\mathcal{H}(B_0)$, it holds
\begin{eqnarray}
\lefteqn{\omega f\, (i\partial_{x_2} + B_0 x_1)\, (\mathcal{H}(B_0)+1)^{-1}} \nonumber \\ && = \omega f\, (\mathcal{H}(B_0)+1)^{-1/2}\, (i\partial_{x_2} + B_0 x_1)\, (\mathcal{H}(B_0)+1)^{-1/2}. \label{T-1}
\end{eqnarray}
In the same way as above, with the help of \cite[Thm.5.7.1]{da}, diamagnetic inequality, and \cite{pi}, we conclude that $\omega\,  f\, (\mathcal{H}(B_0)+1)^{-1/2}$ is compact on $L^2(\mathbb{R}^3)$. On the other hand, $(i\partial_{x_2} + B_0 x_1)\, (\mathcal{H}(B_0)+1)^{-1/2}$ is bounded on $L^2(\mathbb{R}^3)$. As their product the operator \eqref{T-1} is compact; by Weyl's theorem we then have $\sigma_{\mathrm{ess}}(\mathcal{H}(B))=\sigma_{\mathrm{ess}}(\mathcal{H}(B_0))= [B_0,\infty)$.
\end{proof}

\subsection{\bf Lieb-Thirring-type inequalities for $\mathcal{H}(B)$}
\label{LT-3D}

Now we are going to formulate Lieb-Thirring-type inequalities for the negative eigenvalues of $\mathcal{H}(B)-B_0$ in three different cases corresponding to different types of decay conditions on the function $g$. Let us start from a general result. We denote by
$$ 
\alpha(x_3) = \omega(x_3) \int_0^\infty g(r)\, r\, \mathrm{d}r
$$ 
the magnetic flux (up to the sign) through the plane $\{ (x_1,x_2, x_3): (x_1,x_2)\in\mathbb{R}^2\}$ associated with the perturbation. From  Theorem~\ref{thm-red} and inequality (\ref{magn.field}) we make the following conclusion.

\begin{theorem} \label{thm-red3}
Let assumptions (i) and (ii) be satisfied. Suppose, moreover, that $\sup_{x_3} \alpha(x_3) \leq 1$ and put
\begin{equation} \label{lam-k}
\Lambda_k(x_3) = \bigl(\psi_k, \big(\,V_k(\cdot ; x_3)\big)_- \, \psi_k\bigr)_{L^2(\mathbb{R}_+, r \mathrm{d}r)}\,.
\end{equation}
Then the inequality
\begin{equation} \label{eq-LT-gen-3d}
\mathrm{tr} \, (\mathcal{H}(B)-B_0)_-^\sigma \leq  L^{\mathrm{cl}}_{\sigma, 1} \ 2^{\, \sigma+\frac 12} \int_{\mathbb{R}} \, \sum_{k=0}^\infty\ \Lambda_k(x_3)^{\sigma+\frac 12}\, \mathrm{d}x_3\,,  \quad \sigma \geq \frac 32\,,
\end{equation}
holds true whenever the right-hand side is finite.
\end{theorem}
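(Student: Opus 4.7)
The plan is to combine the dimensional-reduction inequality (\ref{magn.field}) with the two-dimensional bound of Theorem~\ref{thm-red}, applied slice-by-slice along the $x_3$ axis.

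First I would apply (\ref{magn.field}) with $\Omega = \mathbb{R}^3$ and $\Lambda = B_0$. The vector potential associated with $B$, namely $A = (0,\, B_0 x_1 - \omega(x_3) f(x_1, x_2),\, 0)$, already satisfies $A_3 \equiv 0$, and its restriction to a fixed-$x_3$ slice is $\widetilde{A}(x', x_3) = (0,\, B_0 x_1 - \omega(x_3) f(x_1, x_2))$. The reduction then yields
$$
\mathrm{tr}\, (\mathcal{H}(B) - B_0)_-^{\sigma} \;\leq\; L^{\mathrm{cl}}_{\sigma, 1}\, \int_{\mathbb{R}} \mathrm{tr}\, \bigl(\widetilde{H}_{\mathbb{R}^2}(\widetilde{A}(\cdot, x_3)) - B_0\bigr)_-^{\sigma+1/2}\, \mathrm{d}x_3
$$
for every $\sigma \geq 3/2$; the left-hand side is a genuine discrete-eigenvalue moment thanks to $\sigma_{\mathrm{ess}}(\mathcal{H}(B)) = [B_0, \infty)$ from Lemma~\ref{lem-es}.

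Next, for each fixed $x_3$, I would identify $\widetilde{H}_{\mathbb{R}^2}(\widetilde{A}(\cdot, x_3))$ with the two-dimensional operator $H(B_{x_3})$ of Section~\ref{s:mgBerezin}: since $\partial_{x_1} f(x_1, x_2) = g(\sqrt{x_1^2 + x_2^2})$, the magnetic field on the slice equals $B_0 - \omega(x_3)\, g(\sqrt{x_1^2 + x_2^2})$, which is obtained from the 2D set-up by the substitution $g \mapsto \omega(x_3)\, g$. The hypotheses of Theorem~\ref{thm-red} are inherited at each $x_3$: conditions (i) and (ii) of Section~\ref{s:mgBerezin} follow from (i) and (ii) of Section~\ref{s:3Dhole} together with $\omega(x_3)\,\|g\|_\infty \leq \|\omega\|_\infty\,\|g\|_\infty \leq B_0$, and the 2D flux constraint reads $\omega(x_3) \int_0^\infty g(r)\, r\, \mathrm{d}r = \alpha(x_3) \leq 1$, which is guaranteed by the standing assumption $\sup_{x_3}\alpha(x_3) \leq 1$.

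Applying Theorem~\ref{thm-red} with $\gamma = \sigma + 1/2$ at each slice therefore delivers
$$
\mathrm{tr}\, \bigl(\widetilde{H}_{\mathbb{R}^2}(\widetilde{A}(\cdot, x_3)) - B_0\bigr)_-^{\sigma+1/2} \;\leq\; 2^{\sigma + 1/2} \sum_{k=0}^\infty \Lambda_k(x_3)^{\sigma + 1/2}\,,
$$
where $\Lambda_k(x_3)$ is exactly (\ref{lam-k}), built from the circular-gauge potential $a(r; x_3) = \tfrac{B_0 r}{2} - \tfrac{\omega(x_3)}{r}\int_0^r g(s)\, s\, \mathrm{d}s$ together with the Landau orbitals $\psi_k$ from (\ref{psi-k}). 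Substituting into the previous display and exchanging the integral with the (nonnegative) sum over $k$ via Tonelli produces the claimed inequality. The only truly delicate step is this slice-wise identification of the 2D problem with the one controlled by Theorem~\ref{thm-red}; once the gauge match and the inheritance of structural assumptions are in place, the rest is direct substitution and no new analytic work beyond (\ref{magn.field}) and Theorem~\ref{thm-red} is required.
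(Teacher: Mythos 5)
Your proposal is correct and follows exactly the route the paper intends: the paper states Theorem~\ref{thm-red3} as a direct consequence of the dimensional-reduction inequality (\ref{magn.field}) applied with $\Lambda=B_0$ together with the two-dimensional bound of Theorem~\ref{thm-red} applied slice-by-slice with $\gamma=\sigma+\tfrac12$ and $g\mapsto\omega(x_3)\,g$. Your explicit verification of the gauge match and of the inheritance of hypotheses (i), (ii) and the flux condition at each slice is exactly the bookkeeping the paper leaves implicit.
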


\subsubsection{Perturbations with a power-like decay}

Now we come to the three cases mentioned above, stating first the results and then presenting the proofs. We start from magnetic fields (\ref{mgf-3d}) with the perturbation $g$ which decays in a powerlike way. Specifically, we shall assume that
\begin{equation} \label{g-power}
0 \, \leq \, g(r) \, \leq \, B_0\, (1+\sqrt{B_0}\ r)^{-2\beta}\,, \qquad \beta >1\,.
\end{equation}
We have included the factor $\sqrt{B_0}$ on the right hand side of \eqref{g-power} having in mind that $B_0^{-1/2}$ is the Landau magnetic length which defines a natural length unit in our model.

\smallskip

\noindent For any $\beta> 1$ and $\gamma> \max\left\{ \frac{1}{\beta-1}\,, 2\right\}$ we define the number
\begin{equation} \label{eq-K}
K(\beta, \gamma) = 2^{-\gamma} +\sum_{k=1}^\infty\, \left(\frac{\Gamma\left (( k+1-\beta)_+\right)}{\Gamma(k)} +\frac{1}{2\sqrt{2\pi k}} \right)^\gamma\,,
\end{equation}
and recall also the classical Lieb-Thirring constants in one dimension,
\begin{equation} \label{LT-constants}
L^{\mathrm{cl}}_{1,\sigma}= \frac{\Gamma(\sigma+1)}{2\sqrt{\pi}\ \Gamma(\sigma+3/2)}\,, \qquad \sigma>0\,.
\end{equation}

\smallskip

\begin{theorem} \label{thm-power}
Assume that $g$ satisfies \eqref{g-power} and that $\|\omega\|_\infty \leq 2(\beta-1)$. Then
$$ 
\mathrm{tr} \, (\mathcal{H}(B)-B_0)_-^\sigma \ \leq \ L^{\mathrm{cl}}_{1,\sigma}\  K\Big(\beta, \sigma+\frac 12\Big) \left(\frac{2\, B_0}{\beta-1}\right)^{\sigma+\frac12} \int_{\mathbb{R}}\omega(x_3)^{\sigma+\frac 12}\, \mathrm{d}x_3
$$ 
holds true for all
\begin{equation} \label{sigma-min}
\sigma > \max\left\{ \frac 32\, ,\,  \frac{3-\beta}{2\beta-2} \right\}\,.
\end{equation}
\end{theorem}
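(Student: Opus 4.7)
The plan is to apply Theorem~\ref{thm-red3}, which reduces matters to the pointwise estimate
$$
\sum_{k=0}^\infty \Lambda_k(x_3)^{\sigma+1/2}\ \le\ K\bigl(\beta,\sigma+\tfrac12\bigr)\,\Bigl(\frac{B_0}{\beta-1}\Bigr)^{\sigma+1/2}\omega(x_3)^{\sigma+1/2},
$$
integrated in $x_3$; pulling the factor $2^{\sigma+1/2}$ from Theorem~\ref{thm-red3} through this bound then produces the prefactor $(2B_0/(\beta-1))^{\sigma+1/2}$.

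First I would verify the hypothesis $\sup_{x_3}\alpha(x_3)\le 1$. Writing $J(r):=\int_0^r g(s)s\,ds$ and substituting $u=\sqrt{B_0}\,s$, the decay condition \eqref{g-power} yields $J(r)\le\int_0^\infty u(1+u)^{-2\beta}\,du\le 1/(2(\beta-1))$, so $\alpha(x_3)=\omega(x_3)J_\infty\le\omega(x_3)/(2(\beta-1))\le 1$ by the hypothesis on $\|\omega\|_\infty$. Next, using \eqref{a-a0} (with $g$ replaced by $\omega(x_3)g$, the radial reduction at fixed $x_3$), a direct calculation of \eqref{vk} gives $V_k(r;x_3)=(\omega J/r^2)(2k+\omega J-B_0 r^2)$ with $\omega:=\omega(x_3)$; thus $(V_k)_-=\omega J(B_0-(2k+\omega J)/r^2)_+$, supported on $\{r^2>(2k+\omega J)/B_0\}$. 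The substitution $\xi=B_0r^2/2$ turns the weight $\psi_k^2\,r\,dr$ from \eqref{psi-k} into $\xi^k e^{-\xi}/\Gamma(k+1)\,d\xi$, so that $\Lambda_k(x_3)$ becomes an integral of $(V_k)_-$ against a Gamma density of shape parameter $k+1$.

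The $k=0$ summand $2^{-(\sigma+1/2)}$ in \eqref{eq-K} follows from the crude bound $\Lambda_0\le\omega B_0 J_\infty\le\omega B_0/(2(\beta-1))$. For $k\ge1$ I would combine two complementary upper bounds on $\Lambda_k$ via the elementary inequality $\Lambda_k\le A_k+B_k\Rightarrow\Lambda_k^{\sigma+1/2}\le(A_k+B_k)^{\sigma+1/2}$. The first bound starts from $(V_k)_-\le\omega JB_0(1-k/\xi)_+$, uses $J\le 1/(2(\beta-1))$, and invokes the integration-by-parts identity $\int_k^\infty(1-k/\xi)\xi^k e^{-\xi}\,d\xi=k^k e^{-k}$; combined with the Stirling estimate $\Gamma(k+1)\ge\sqrt{2\pi k}(k/e)^k$, it yields $\Lambda_k\le\omega B_0/((\beta-1)\cdot 2\sqrt{2\pi k})$, matching the summand $1/(2\sqrt{2\pi k})$ of $C_k$. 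The second exploits the finer pointwise estimate $J(r)\le(1-(1+\sqrt{B_0}r)^{2-2\beta})/(2(\beta-1))$, which follows from $g(s)\le B_0(1+\sqrt{B_0}s)^{-2\beta}$ via $s\le(1+\sqrt{B_0}s)/\sqrt{B_0}$; the algebraic tail $(1+\sqrt{2\xi})^{2-2\beta}\sim(2\xi)^{1-\beta}$, integrated against $\xi^k e^{-\xi}/\Gamma(k+1)$ together with the $(1-k/\xi)$ factor, produces the summand $\Gamma((k+1-\beta)_+)/\Gamma(k)$.

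Summing the two contributions, raising to the $(\sigma+1/2)$th power, and summing in $k$ assembles $K(\beta,\sigma+1/2)$ from \eqref{eq-K}; the asymptotics $A_k\sim k^{1-\beta}$ and $B_k\sim k^{-1/2}$ show that the finiteness of the resulting series is equivalent to the condition \eqref{sigma-min}. The main obstacle is the second of the two bounds: one has to track carefully the interplay between the concentration of the Gamma weight near $\xi=k$ and the power-like tail of $g$, so that the precise gamma-function prefactor $\Gamma((k+1-\beta)_+)/\Gamma(k)$ emerges with the right constant and the $(\cdot)_+$ convention covers the small-$k$ regime where $k+1\le\beta$.
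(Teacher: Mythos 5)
Your overall architecture matches the paper's: reduce via Theorem~\ref{thm-red3}, check $\sup_{x_3}\alpha(x_3)\le 1$, compute $V_k=(\omega J/r^2)(2k+\omega J-B_0r^2)$ with $J(r)=\int_0^r g(s)s\,\mathrm{d}s$, bound $\Lambda_0\le\alpha B_0\le \omega B_0/(2(\beta-1))$, and extract the summand $1/(2\sqrt{2\pi k})$ from $\int_k^\infty(1-k/\xi)\xi^k e^{-\xi}\,\mathrm{d}\xi=k^k e^{-k}$ plus Stirling --- that last step is exactly Lemma~\ref{lem-aux}. The genuine gap is the second bound, the one you need to produce the summand $\Gamma\bigl((k+1-\beta)_+\bigr)/\Gamma(k)$. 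In the paper this term does \emph{not} come from refining the estimate of $J(r)$ inside the factor $\omega J(r)B_0(1-k/\xi)_+$; it comes from a different piece of the splitting \eqref{vk-neg}, namely the cross term $\frac{2k\omega}{r^2}\int_r^\infty g(s)s\,\mathrm{d}s$ --- the \emph{tail} of the flux integral weighted by $2k/r^2$, not by $(B_0-2k/r^2)_+$ --- which defines $\lambda_k$ in \eqref{lam-k-2}; the bound $\int_r^\infty g(s)s\,\mathrm{d}s\le\frac{1}{2(\beta-1)}(1+\sqrt{B_0}\,r)^{2-2\beta}$ then yields $\Gamma(k+1-\beta)/\Gamma(k)$ for $k>\beta-1$ after the substitution $s=B_0r^2/2$. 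In your setup the algebraic tail enters through $J(r)\le\frac{1}{2(\beta-1)}\bigl(1-(1+\sqrt{B_0}\,r)^{2-2\beta}\bigr)$ with a \emph{minus} sign, so integrating it against the Gamma weight can only subtract from your first bound $A_k$; no amount of care about the concentration near $\xi=k$ will turn it into an additive positive contribution $B_k$. The step you yourself flag as ``the main obstacle'' would indeed fail as described.

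The ironic twist is that your proposal is rescuable by deleting the second bound altogether. Your exact formula gives $(V_k)_-\le\omega J(r)\,(B_0-2k/r^2)_+\le\alpha(x_3)(B_0-2k/r^2)_+$, hence $\Lambda_k\le\alpha B_0\,k^k e^{-k}/k!\le\omega B_0/\bigl(2(\beta-1)\sqrt{2\pi k}\bigr)$ for every $k\ge1$, which is already dominated by the $k$-th summand of $K(\beta,\sigma+\tfrac12)\bigl(B_0/(\beta-1)\bigr)^{\sigma+1/2}\omega^{\sigma+1/2}$; the stated inequality then follows, with the condition $\sigma>\frac{3-\beta}{2\beta-2}$ needed only to make $K$ itself finite. (This is in fact slightly sharper than the paper's route, because $(B_0-(2k+\omega J)/r^2)_+\le(B_0-2k/r^2)_+$ makes the paper's extra term $\lambda_k$ redundant as an upper bound.) So either argue this way and drop the $\Gamma$-term from your mechanism entirely, or adopt the paper's decomposition \eqref{vk-neg} if you want that term to carry genuine content.
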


\begin{remark}
{\rm Since $\omega\in L^\infty(\mathbb{R}) \cap L^2(\mathbb{R})$, it follows that $\omega\in L^{\sigma+\frac 12}(\mathbb{R})$ for any $\sigma\geq 3/2$. Note also that by the Stirling formula we have}
$$
\frac{\Gamma\left ( k+1-\beta \right)}{\Gamma(k)} \ \sim \ k^{1-\beta} \quad \mathrm{as}\quad k\to\infty\,.
$$
{\rm Hence the constant $K\big(\beta, \sigma+\frac 12\big)$ is finite for any $\sigma$ satisfying \eqref{sigma-min}.}
\end{remark}

\subsubsection{Gaussian decay}

Next we assume that the perturbation $g$ has a Gaussian decay, in other words
\begin{equation} \label{g-gauss}
0 \, \leq \, g(r) \, \leq \, B_0\,  e^{- \varepsilon B_0 r^2} \, , \qquad \varepsilon >0.
\end{equation}

\smallskip

\begin{theorem} \label{thm-gauss}
Assume that $g$ satisfies \eqref{g-gauss} and that $\|\omega\|_\infty \leq 2 \varepsilon$. Then for any $\sigma > 3/2$ it holds
$$ 
\mathrm{tr} \, (\mathcal{H}(B)-B_0)_-^\sigma \ \leq \  L^{\mathrm{cl}}_{\sigma,1}\,  \left(\frac{B_0}{\varepsilon}\right)^{\sigma+\frac 12} G(\varepsilon, \sigma)\,  \int_{\mathbb{R}} \omega(x_3)^{\sigma+\frac 12}\, \mathrm{d}x_3\,,
$$ 
where
\begin{equation}\label{G}
G(\varepsilon, \sigma) = 1 + \sum_{k=1}^\infty \left((1+2\varepsilon)^{-k} + \frac{1}{2\sqrt{2\pi k}}\right)^{\sigma+\frac 12}\,.
\end{equation}\end{theorem}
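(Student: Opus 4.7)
\medskip

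\noindent\textbf{Proof plan for Theorem~\ref{thm-gauss}.}
The strategy parallels that of Theorem~\ref{thm-power}: I will reduce via Theorem~\ref{thm-red3} and then estimate each $\Lambda_k(x_3)$ sharply, exploiting both the Gaussian decay of $g$ and the Gaussian tails of $\psi_k$.

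First I would check the hypothesis $\sup_{x_3} \alpha(x_3) \leq 1$ of Theorem~\ref{thm-red3}. The Gaussian bound \eqref{g-gauss} yields
\[
\int_0^\infty g(r)\, r\,\mathrm{d}r \;\leq\; B_0 \int_0^\infty r\, e^{-\varepsilon B_0 r^2}\,\mathrm{d}r \;=\; \frac{1}{2\varepsilon}\,,
\]
so $\alpha(x_3) \leq \|\omega\|_\infty/(2\varepsilon) \leq 1$ under the assumption $\|\omega\|_\infty \leq 2\varepsilon$. Theorem~\ref{thm-red3} then reduces the problem to bounding $\sum_{k=0}^\infty \Lambda_k(x_3)^{\sigma+\frac12}$ and integrating against $\mathrm{d}x_3$.

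Second, I would analyze the structure of $V_k$. Writing $G(r) := \int_0^r g(s)\, s\,\mathrm{d}s$, formulas \eqref{a-a0} and \eqref{vk} give
\[
V_k(r;x_3) \;=\; \omega(x_3)\, G(r)\, \biggl(\frac{2k+\omega(x_3)G(r)}{r^2} - B_0\biggr)\,,
\]
so $(V_k)_-$ is supported on $\{r^2 \geq 2k/B_0\}$ and admits the pointwise bound $(V_k)_- \leq \omega(x_3)\,G(r)\,B_0\,(1-r_k^2/r^2)_+$ with $r_k := \sqrt{2k/B_0}$. Combined with $G(r) \leq 1/(2\varepsilon)$ this gives an expression for $\Lambda_k(x_3)$ bounded by $\frac{\omega(x_3) B_0}{2\varepsilon}$ times a purely radial integral in $|\psi_k(r)|^2 r\,\mathrm{d}r$.

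Third, the two terms appearing inside $G(\varepsilon,\sigma)$ reflect two complementary estimates of this radial integral. Using $|\psi_k|^2 r = \frac{B_0^{k+1}}{k!\,2^k}\, r^{2k+1}\, e^{-B_0 r^2/2}$, direct Gaussian integration gives
\[
\int_0^\infty e^{-\varepsilon B_0 r^2}\, |\psi_k(r)|^2\, r\,\mathrm{d}r \;=\; (1+2\varepsilon)^{-(k+1)}\,,
\]
which, paired with $g(s)\le B_0 e^{-\varepsilon B_0 s^2}$ via the Fubini identity $\int_0^\infty G(r)|\psi_k|^2 r\,\mathrm{d}r = \int_0^\infty g(s)\,s\, \int_s^\infty |\psi_k|^2 r\,\mathrm{d}r\,\mathrm{d}s$, produces the exponential contribution $(1+2\varepsilon)^{-k}$. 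Independently, the explicit computation $\int_{r_k}^\infty (1-r_k^2/r^2)|\psi_k|^2 r\,\mathrm{d}r = k^k e^{-k}/k!$, followed by Stirling's lower bound $k! \geq \sqrt{2\pi k}\,(k/e)^k$, yields the algebraic contribution $\tfrac{1}{2\sqrt{2\pi k}}$. Combining both in a convex-combination decomposition of the integrand, one obtains
\[
\Lambda_k(x_3) \;\leq\; \frac{\omega(x_3)\, B_0}{2\varepsilon}\, \Bigl((1+2\varepsilon)^{-k} + \tfrac{1}{2\sqrt{2\pi k}}\Bigr), \qquad k\ge 1\,,
\]
with the obvious bound $\Lambda_0(x_3)\leq \omega(x_3) B_0/(2\varepsilon)$ at $k=0$.

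The main obstacle is arranging the split in step three so that one part of $(V_k)_-$ is absorbed by the Gaussian tail of $g$ (producing the geometric factor $(1+2\varepsilon)^{-k}$) while the remainder is controlled by the Gaussian tail of $\psi_k$ alone (producing the Stirling factor); choosing the decomposition asymmetrically is what allows both decay rates to appear \emph{additively} rather than one at a time. Once the above per-mode estimate is in hand, raising to the power $\sigma+\tfrac12$, summing over $k\ge 0$, and integrating against $\mathrm{d}x_3$ reproduces exactly the series defining $G(\varepsilon,\sigma)$ in \eqref{G} and the prefactor $L^{\mathrm{cl}}_{\sigma,1}(B_0/\varepsilon)^{\sigma+\frac12}$ via Theorem~\ref{thm-red3}.
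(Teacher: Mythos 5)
Your overall architecture matches the paper's — verify $\sup_{x_3}\alpha(x_3)\le 1$, reduce via Theorem~\ref{thm-red3}, establish the per-mode bound $\Lambda_k(x_3)\le\frac{\omega(x_3)B_0}{2\varepsilon}\bigl((1+2\varepsilon)^{-k}+\frac{1}{2\sqrt{2\pi k}}\bigr)$, then raise to the power $\sigma+\frac12$, sum and integrate — and your flux estimate, your factorization $V_k=\omega G\bigl(\frac{2k+\omega G}{r^2}-B_0\bigr)$, and the final assembly of constants are all correct. The gap is that the per-mode bound, the only nontrivial step, is asserted rather than proved, and the mechanism you propose for it does not work.

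Concretely: your pointwise bound $(V_k)_-\le\omega(x_3)\,G(r)\,B_0\,(1-r_k^2/r^2)_+$ combined with $G\le\frac{1}{2\varepsilon}$ yields only $\Lambda_k\le\frac{\omega B_0}{2\varepsilon}\cdot\frac{k^k e^{-k}}{k!}$, and since $\frac{k^k e^{-k}}{k!}\sim\frac{1}{\sqrt{2\pi k}}$, this exceeds the claimed $(1+2\varepsilon)^{-k}+\frac{1}{2\sqrt{2\pi k}}$ for large $k$; it therefore proves the theorem only with a different constant in place of \eqref{G} (namely $1+\sum_k(2\pi k)^{-(\sigma+1/2)/2}$), not the stated one. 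Moreover, the route you sketch to the geometric factor rests on the wrong object: $\int_0^\infty G(r)\psi_k^2(r)\,r\,\mathrm{d}r$ does not decay in $k$ at all — since $G$ is increasing and saturates at the total flux, this integral tends to $\int_0^\infty g(s)s\,\mathrm{d}s$ as $k\to\infty$ — so no Fubini manipulation of it can produce $(1+2\varepsilon)^{-k}$. That factor must come from the \emph{tail} $\int_r^\infty g(s)s\,\mathrm{d}s\le\frac{1}{2\varepsilon}e^{-\varepsilon B_0 r^2}$ evaluated where $\psi_k^2$ concentrates, $r\sim\sqrt{2k/B_0}$. The paper achieves this with the asymmetric split \eqref{vk-neg}: inside $-V_k$ one writes $\omega G=\alpha-\omega\int_r^\infty g s\,\mathrm{d}s$ only in the centrifugal term $-\frac{2k\omega G}{r^2}$ while bounding $\omega G B_0\le\alpha B_0$, giving $(V_k)_-\le\frac{2k\omega}{r^2}\int_r^\infty gs\,\mathrm{d}s+\alpha\bigl(B_0-\frac{2k}{r^2}\bigr)_+$; the first summand produces $\lambda_k\le\frac{\omega B_0}{2\varepsilon}(1+2\varepsilon)^{-k}$ via $\int_0^\infty e^{-B_0r^2(1+2\varepsilon)/2}r^{2k-1}\,\mathrm{d}r$ (not your integral $\int e^{-\varepsilon B_0 r^2}\psi_k^2\,r\,\mathrm{d}r$), and the second is exactly Lemma~\ref{lem-aux}. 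This split cannot be recovered from your factored one-term bound, because there the tail contribution $-\omega\int_r^\infty gs\,\mathrm{d}s$ enters with the wrong sign. So either adopt \eqref{vk-neg} explicitly, or state the theorem with the different constant your argument actually delivers.
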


\subsubsection{Perturbations with a compact support}

Let $D$ be a circle of radius $R$ centered at the origin and put
\begin{equation} \label{g-hole}
g(r) = \left\{
\begin{array}{l@{\quad}l}
B_0                   & \quad r \leq R \, \\
0 &\quad r > R  \\
\end{array}
\right. \, .
\end{equation}

\smallskip

\begin{theorem} \label{thm-hole}
Assume that $g$ satisfies \eqref{g-hole} with $R$ such that $B_0 R^2\leq 2$. Suppose moreover that $\|\omega\|_\infty \leq 1$. Then for any $\sigma > 3/2$ it holds
\begin{equation} \label{LT-hole-3d}
\mathrm{tr} \, (\mathcal{H}(B)-B_0)_-^\sigma \ \leq \  L^{\mathrm{cl}}_{\sigma,1}\ J\Big(B_0\, , \sigma\Big)\ B_0^{\sigma+\frac 12}
\int_{\mathbb{R}} \omega(x_3)^{\sigma+\frac 12}\, \mathrm{d}x_3\,,
\end{equation}
where
\begin{equation}
J(B_0, \sigma) = \left(B_0\, R^2\right)^{\sigma+\frac 12}\left( 1 + \sum_{k=1}^\infty  \left( \left(\frac{B_0\, R^2}{2}\right)^{k+1} \frac{1}{k!} + \frac{1}{2\sqrt{2\pi k}}\right)^{\sigma+\frac 12}\, \right)\, .
\end{equation}
\end{theorem}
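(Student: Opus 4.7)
The strategy is to apply Theorem~\ref{thm-red3} and reduce the three-dimensional bound to estimating the radial quantities $\Lambda_k(x_3)$ in \eqref{lam-k}. First I verify the hypothesis $\sup_{x_3}\alpha(x_3)\le 1$: for the compactly supported $g$ in \eqref{g-hole} one has $\alpha(x_3)=\omega(x_3)\int_0^R B_0 s\,\mathrm{d}s=\omega(x_3)B_0R^2/2$, and the combined assumptions $\|\omega\|_\infty\le 1$ and $B_0R^2\le 2$ force $\alpha\le 1$. Setting $a:=B_0R^2/2\le 1$, the primitive $G(r):=\int_0^r g(s)\,s\,\mathrm{d}s$ is $B_0r^2/2$ on $[0,R]$ and the constant $a$ on $(R,\infty)$.

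Passing to the dimensionless variable $u=B_0 r^2/2$, so that $\psi_k(r)^2\,r\,\mathrm{d}r=u^k e^{-u}/k!\,\mathrm{d}u$, a short calculation using \eqref{vk} with the perturbation $\omega(x_3)g$ yields $V_k(r;x_3)=B_0\omega(x_3)\bigl(k-u(1-\omega(x_3)/2)\bigr)$ for $u\le a$ and
\[
V_k(r;x_3)=\omega(x_3)\,a\,B_0\,\frac{b-u}{u},\qquad b:=k+\tfrac{\omega(x_3)\,a}{2},
\]
for $u>a$. The key observation is that $V_k\ge B_0\omega(x_3)(k-a)\ge 0$ on $[0,R]$ whenever $k\ge 1$ (using $u\le a\le 1$ and $\omega\le 1$), so the negative part $(V_k)_-$ is supported in $\{u>b\}$ and vanishes linearly at $u=b$. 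For $k=0$ a direct calculation produces $\Lambda_0(x_3)=\omega(x_3)B_0(1-e^{-a})\le \omega(x_3)B_0 a$, which accounts for the leading ``$1$'' in $J(B_0,\sigma)$.

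For $k\ge 1$ the explicit formula for $(V_k)_-$ together with the recursion $\Gamma(k+1,b)=k\Gamma(k,b)+b^k e^{-b}$ gives
\[
\Lambda_k(x_3)=\frac{\omega(x_3)\,a\,B_0}{k!}\Bigl(b^k e^{-b}-\tfrac{\omega(x_3)a}{2}\Gamma(k,b)\Bigr)\le \omega(x_3)\,a\,B_0\,\frac{(k+\omega(x_3)a/2)^k e^{-(k+\omega(x_3)a/2)}}{k!}.
\]
The main obstacle and real heart of the argument is to dominate this Poisson-type weight by $a^{k+1}/k!+(2\sqrt{2\pi k})^{-1}$: the plan is to isolate the contribution from a narrow neighbourhood of the turning point $u=b$, which furnishes the power-series term $a^{k+1}/k!$, and to control the remaining bulk by Stirling's inequality $k!\ge\sqrt{2\pi k}\,k^k e^{-k}$ combined with the extra subtraction coming from $(\omega(x_3)a/2)\Gamma(k,b)$. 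Raising $\Lambda_k$ to the power $\sigma+1/2$, summing over $k\ge 0$, factoring out $(B_0 R^2)^{\sigma+1/2}\omega(x_3)^{\sigma+1/2}$, and integrating in $x_3$ against \eqref{eq-LT-gen-3d} produces \eqref{LT-hole-3d}; the condition $\sigma>3/2$ ensures both the applicability of Theorem~\ref{thm-red3} and the convergence of the series defining $J(B_0,\sigma)$.
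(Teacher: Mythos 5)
Your reduction to Theorem~\ref{thm-red3}, the verification of $\sup_{x_3}\alpha(x_3)\le 1$, the closed forms for $V_k$ in the variable $u=B_0r^2/2$, and the exact evaluation $\Lambda_k=\frac{\omega a B_0}{k!}\bigl(b^k e^{-b}-\frac{\omega a}{2}\Gamma(k,b)\bigr)$ for $k\ge 1$ (with your $a=B_0R^2/2$, $b=k+\omega(x_3)a/2$) are all correct; for $k=0$ your ``$=$'' should be ``$\le$'', since you dropped the negative contribution of $\frac{\omega^2 G^2 B_0}{2u}$, but the bound $\Lambda_0\le \alpha B_0$ that you actually use is fine. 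The genuine gap is the step you yourself flag as the main obstacle: the domination $\frac{b^k e^{-b}}{k!}-\frac{\omega a}{2}\frac{\Gamma(k,b)}{k!}\le \frac{a^{k+1}}{k!}+\frac{1}{2\sqrt{2\pi k}}$ is only sketched, and in fact it is \emph{false} for large $k$ when $\omega a$ is small. Since $b=k+\omega a/2$ with $\omega a/2\le 1/2$, Stirling gives $\frac{b^k e^{-b}}{k!}=\frac{1}{\sqrt{2\pi k}}\bigl(1+O(k^{-1})\bigr)$, while the subtraction $\frac{\omega a}{2}\frac{\Gamma(k,b)}{k!}\sim\frac{\omega a}{4k}$ is of lower order and $\frac{a^{k+1}}{k!}$ is negligible; the left-hand side therefore approaches $\frac{1}{\sqrt{2\pi k}}$, i.e.\ twice the allowed $\frac{1}{2\sqrt{2\pi k}}$. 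No decomposition near the turning point $u=b$ can recover a factor $2$ that is simply not there, so the plan as described cannot be completed.

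The source of the trouble is not your computation but the constant in the statement: the paper's own proof does not deliver the displayed $J$ either. The paper never diagonalizes $V_k$; it uses the generic splitting \eqref{vk-neg} of $(V_k)_-$ into $\frac{2k\omega}{r^2}\int_r^\infty g(s)s\,\mathrm{d}s$ (supported in $r\le R$, giving $\lambda_k\le B_0\omega\, a^{k+1}/k!$ by elementary estimates) and $\alpha\bigl(B_0-2k/r^2\bigr)_+$ (bounded via Lemma~\ref{lem-aux} by $\alpha B_0/\sqrt{2\pi k}$), arriving at $\Lambda_k\le \omega a B_0\bigl(\frac{a^k}{k!}+\frac{1}{\sqrt{2\pi k}}\bigr)$ --- that is, with $a^k$ and $\frac{1}{\sqrt{2\pi k}}$ rather than $a^{k+1}$ and $\frac{1}{2\sqrt{2\pi k}}$. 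If one replaces $\frac{1}{2\sqrt{2\pi k}}$ by $\frac{1}{\sqrt{2\pi k}}$ in $J$, your exact computation closes immediately and even more cleanly than the paper's: $\Lambda_k\le \omega a B_0\,\frac{b^k e^{-b}}{k!}\le \omega a B_0\,\frac{k^k e^{-k}}{k!}\le \frac{\omega a B_0}{\sqrt{2\pi k}}$, because $u\mapsto u^k e^{-u}$ is decreasing for $u\ge k$. So your route is sound and for $k\ge 1$ slightly sharper than the paper's, but neither argument establishes the inequality with the constant $J$ exactly as printed.
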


\subsection{The proofs}

Note that the assumptions of these theorems ensure that $\sup_{x_3} \alpha(x_3)\leq 1$, hence in all the three cases we may apply Theorem~\ref{thm-red} and, in particular, the estimate \eqref{eq-LT-gen-3d}. To this note it is useful to realize that by \eqref{a-a0}, \eqref{vk} and \eqref{flux.} we have
\begin{eqnarray}
\lefteqn{V_k(r;x_3) = -\alpha(x_3) B_0 +\frac{2 \alpha(x_3) k}{r^2} - \frac{2k\, \omega(x_3)}{r^2} \int_r^\infty g(s)\, s\, \mathrm{d}s} \nonumber \\
&& \quad +B_0\, \omega(x_3) \int_r^\infty g(s)\, s\, \mathrm{d}s + \frac{\omega^2(x_3)}{r^2} \left(\int_0^r g(s)\, s\, \mathrm{d}s\right)^2. \label{vk-eq}
\end{eqnarray}
Consequently, we obtain a simple upper bound on the negative part of $V_k$,
\begin{equation} \label{vk-neg}
\big(\,V_k(r ; x_3)\big)_- \ \leq\ \frac{2k\, \omega(x_3)}{r^2} \int_r^\infty g(s)\, s\, \mathrm{d}s +\alpha(x_3)\left(B_0 -\frac{2k}{r^2}\right)_+
\end{equation}
for all $k\in\mathbb{N}\cup\{0\}$. For $k=0$ we clearly we have
\begin{equation} \label{upperb-0}
\Lambda_0(x_3) \, \leq\,  \alpha(x_3) B_0\,,
\end{equation}
by \eqref{norm}. In order to estimate $\Lambda_k(x_3)$ with $k\geq 1$ we denote by $\lambda_k(x_3)$ the contribution to $\Lambda_k(x_3)$ coming from the first term on the right-hand side of \eqref{vk-neg}, i.e.
\begin{equation} \label{lam-k-2}
\lambda_k(x_3) =  2\, \omega(x_3)\, k \int_0^\infty \psi^2_k(r) \left(\int_r^\infty g(s)\, s\, \mathrm{d}s\right) \, r^{-1}\, \mathrm{d}r\,.
\end{equation}
Before coming to the proofs we need an auxiliary result.

\begin{lemma} \label{lem-aux}
For any $k\in\mathbb{N}$ it holds
$$
\Lambda_k(x_3) \, \leq\, \lambda_k(x_3) + \frac{\alpha(x_3) B_0}{\sqrt{2\pi k}}\,.
$$
\end{lemma}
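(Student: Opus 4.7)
The plan is to start from the pointwise upper bound \eqref{vk-neg} on $(V_k(r;x_3))_-$ and plug it into the definition \eqref{lam-k} of $\Lambda_k(x_3)$, thus splitting the quadratic form into two terms. The first term, coming from $\frac{2k\omega(x_3)}{r^2}\int_r^\infty g(s)s\,\mathrm{d}s$, reproduces exactly $\lambda_k(x_3)$ defined in \eqref{lam-k-2}, since the measure $r\,\mathrm{d}r$ combines with the $r^{-2}$ weight to give $r^{-1}\,\mathrm{d}r$. It therefore remains to control the contribution
$$
I_k(x_3) \,:=\, \alpha(x_3) \int_0^\infty \psi_k^2(r)\left(B_0-\frac{2k}{r^2}\right)_{\!+}\, r\,\mathrm{d}r
$$
coming from the second term of \eqref{vk-neg}, and to show that it is bounded by $\alpha(x_3) B_0/\sqrt{2\pi k}$.

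The restriction to the set where the integrand is non-negative amounts to integrating over $r\geq \sqrt{2k/B_0}$. Inserting the explicit form \eqref{psi-k} of $\psi_k$ and performing the substitution $u=B_0 r^2/2$ (so that $r\,\mathrm{d}r=\mathrm{d}u/B_0$, $r^2=2u/B_0$, and the lower limit becomes $u=k$), the integral collapses to
$$
I_k(x_3) \,=\, \alpha(x_3)\,\frac{B_0}{k!}\int_k^\infty\bigl(u^k - k\,u^{k-1}\bigr)e^{-u}\,\mathrm{d}u.
$$
The integrand is, up to a sign, the derivative of $u^k e^{-u}$, so the integral evaluates explicitly to $k^k e^{-k}$, giving
$$
I_k(x_3) \,=\, \alpha(x_3)\,\frac{B_0\,k^k e^{-k}}{k!}.
$$

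The last step is Stirling's inequality in the form $k!\geq \sqrt{2\pi k}\,k^k e^{-k}$, valid for all $k\in\mathbb{N}$, which yields $k^k e^{-k}/k! \leq 1/\sqrt{2\pi k}$ and hence $I_k(x_3)\leq \alpha(x_3)B_0/\sqrt{2\pi k}$. Combining this with the identification of the first term as $\lambda_k(x_3)$ gives the claimed bound. The only mildly delicate point is the integral computation: recognizing $(u^k-k u^{k-1})e^{-u} = -(u^k e^{-u})'$ so that the integration from $k$ to $\infty$ produces the boundary term $k^k e^{-k}$, which is precisely the quantity controlled by Stirling's formula. Everything else is bookkeeping.
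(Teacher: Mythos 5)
Your argument is correct and follows essentially the same route as the paper: insert the pointwise bound \eqref{vk-neg} into \eqref{lam-k}, identify the first contribution as $\lambda_k(x_3)$, evaluate the remaining integral via the substitution $s=B_0r^2/2$ to get $B_0 k^k e^{-k}/k!$, and finish with the Stirling-type bound $k!\geq\sqrt{2\pi}\,k^{k+1/2}e^{-k}$. The only cosmetic difference is that you recognize the integrand as the total derivative $-(u^k e^{-u})'$, whereas the paper splits the integral in two and integrates by parts; the resulting boundary term $k^k e^{-k}$ is the same.
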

\begin{proof}
In view of \eqref{lam-k}, \eqref{vk-neg}, and \eqref{lam-k-2} the claim will follow if we show that
\begin{equation} \label{enough}
 \int_0^\infty \psi^2_k(r) \, \left(B_0 -\frac{2k}{r^2}\right)_+ r\, \mathrm{d}r\ \leq \ \frac{B_0}{\sqrt{2\pi k}}\,.
\end{equation}
Let $r_k = \sqrt{\frac{2k}{B_0}}$. Using \eqref{psi-k} and the substitution $s= \frac{B_0 r^2}{2}$ we then find
\begin{eqnarray*}
\lefteqn{\int_0^\infty \psi^2_k(r) \, \left(B_0 -\frac{2k}{r^2}\right)_+ r\, \mathrm{d}r = B_0  \int_{r_k}^\infty \psi^2_k(r) \, r\, \mathrm{d}r - 2k
  \int_{r_k}^\infty \psi^2_k(r) \, r^{-1}\, \mathrm{d}r} \\
  && = \frac{B_0}{\Gamma(k+1)} \int_k^\infty e^{-s}\, s^k\, \mathrm{d}s -\frac{B_0}{\Gamma(k)}  \int_k^\infty e^{-s}\, s^{k-1}\, \mathrm{d}s\,. \phantom{AAAAAAAAAAA}
\end{eqnarray*}
Integration by parts gives
$$
\int_k^\infty e^{-s}\, s^k\, \mathrm{d}s = e^{-k}\, k^k + k \int_k^\infty e^{-s}\, s^{k-1}\, \mathrm{d}s\,,
$$
hence
$$
 \int_0^\infty \psi^2_k(r) \, \left(B_0 -\frac{2k}{r^2}\right)_+ r\, \mathrm{d}r = \frac{e^{-k}\, k^k\, B_0}{\Gamma(k+1)}\,,
$$
and inequality \eqref{enough} follows from the Stirling-type estimate \cite[Eq.~6.1.38]{AS64}
$$
\Gamma(k+1) = k! \geq \sqrt{2\pi}\ k^{\, k+\frac 12}\, e^{-k}\,, \qquad k\in\mathbb{N}\,;
$$
this concludes the proof.
\end{proof}

\smallskip

\begin{proof}[\bf Proof of Theorem \ref{thm-power}]
In view of \eqref{upperb-0} and Lemma \ref{lem-aux} it suffices to estimate $\lambda_k(x_3)$ in a suitable way from above for $k\geq 1$. Using \eqref{g-power} we find
\begin{align*}
\int_0^\infty g(r)\, r \, \mathrm{d}r & \leq B_0 \int_0^\infty
(1+\sqrt{B_0}\ r)^{-2\beta}\, r\,  \mathrm{d}r \leq B_0 \int_0^\infty (1+\sqrt{B_0}\ r)^{1-2\beta}\, \mathrm{d}r \\
&=  \int_0^\infty (1+ s)^{1-2\beta}\, \mathrm{d}s = \frac{1}{2(\beta-1)}\,,
\end{align*}
which implies
\begin{equation} \label{alpha-upperb}
\alpha(x_3) \leq  \frac{\omega(x_3)}{2(\beta-1)}\, .
\end{equation}
Moreover, by virtue of \eqref{g-power}
$$ 
\int_r^\infty g(s)\, s\, \mathrm{d}s \, \leq \, \sqrt{B_0}\,
\int_r^\infty (1+\sqrt{B_0}\ s)^{1-2\beta}\, \mathrm{d}s = \frac{1}{2\beta-2}\, (1+\sqrt{B_0}\ r)^{2-2\beta} .
$$ 
Assume first that $1\leq k \leq \beta -1$. In this case a combination of \eqref{psi-k} and the last equation gives
\begin{eqnarray}
\lefteqn{\lambda_k(x_3) \leq \frac{\omega(x_3)\, B_0}{(\beta-1)\, \Gamma(k)} \,  \left(\frac{B_0}{2}\right)^{k} \int_0^\infty e^{-\frac{B_0 r^2}{2}} r^{2k-1} (1+\sqrt{B_0}\ r)^{2-2\beta}\, \mathrm{d}r} \nonumber \\
&& =   \frac{\omega(x_3)\, B_0}{(\beta-1)\, \Gamma(k)} \int_0^\infty e^{-s} s^{k-1}\, (1+\sqrt{2 s})^{2-2\beta}\, \mathrm{d}s \, \nonumber \\
&& \leq\,   \frac{\omega(x_3)\, B_0}{(\beta-1)\, \Gamma(k)} \, \int_0^\infty e^{-s} \, \mathrm{d}s  =  \frac{\omega(x_3)\, B_0}{(\beta-1)\, \Gamma(k)} \,, \phantom{AAAAAAAAAAAA} \label{k-geq-1}
\end{eqnarray}
where we have used again the substitution $s= \frac{B_0 r^2}{2}$.

\smallskip

\noindent On the other hand, for $k > \beta-1$ we have
\begin{eqnarray*}
\lefteqn{\lambda_k(x_3) \leq \frac{\omega(x_3)\, B_0}{(\beta-1)\, \Gamma(k)} \,  \left(\frac{B_0}{2}\right)^{k} \int_0^\infty e^{-\frac{B_0 r^2}{2}} r^{2k-1} (1+\sqrt{B_0}\ r)^{2-2\beta}\, \mathrm{d}r}  \\
&& \leq  \frac{\omega(x_3)\, B_0}{(\beta-1)\, \Gamma(k)} \,  \left(\frac{B_0}{2}\right)^{k} \int_0^\infty e^{-\frac{B_0 r^2}{2}} r^{2k-1} (B_0\, r^2)^{1-\beta}\, \mathrm{d}r \\
&& \leq  \frac{\omega(x_3)\, B_0}{(\beta-1)\, \Gamma(k)} \int_0^\infty e^{-s} s^{k-\beta}\,  \mathrm{d}s =  \frac{\omega(x_3)\, B_0\, \Gamma(k+1-\beta)}{(\beta-1)\, \Gamma(k)}\,. \phantom{AAAA}
\end{eqnarray*}
This together with equations \eqref{alpha-upperb}, \eqref{upperb-0}, \eqref{k-geq-1} and Lemma \ref{lem-aux} shows that
$$
 \sum_{k=0}^\infty\, \Lambda_k^\gamma(x_3)\, \leq \,  K(\beta, \gamma) \left(\frac{B_0}{\beta-1}\right)^{\gamma} \omega(x_3)^{\gamma}\, ,
$$
with the constant $K(\beta, \gamma)$ given by \eqref{eq-K}. The claim now follows from \eqref{eq-LT-gen-3d} upon setting $\gamma=\sigma +\frac 12$.
\end{proof}

\begin{proof}[\bf Proof of Theorem \ref{thm-gauss}]
We proceed as in the proof of Theorem \ref{thm-power} and use equation \eqref{upperb-0} and Lemma \ref{lem-aux}. Since
\begin{equation} \label{alpha-gauss}
\alpha(x_3)  \leq \omega(x_3)\, B_0 \int_0^\infty B_0\,  e^{- \varepsilon B_0 r^2} \, r\, \mathrm{d}r =  \frac{\omega(x_3)}{2\varepsilon}
\end{equation}
holds in view of \eqref{g-gauss}, for $k=0$ we get
$$
\Lambda_0(x_3) \, \leq \, \alpha(x_3) B_ 0 \, \leq \, \frac{\omega(x_3)\, B_0}{2\varepsilon}\,.
$$
On the other hand,
$$
\int_r ^\infty g(s)\, s\, \mathrm{d}s \leq  B_0 \int_r^\infty e^{- \varepsilon B_0 s^2} \, s\, \mathrm{d}s =  \frac{1}{2 \varepsilon}\, e^{- \varepsilon B_0 r^2} \, .
$$
Hence using the substitution $s= \frac{B_0 r^2}{2} (1+2\varepsilon)$, we obtain
\begin{eqnarray*}
\lefteqn{\lambda_k(z) \leq \frac{\omega(x_3)\, B_0}{\varepsilon\, \Gamma(k)} \,  \left(\frac{B_0}{2}\right)^{k} \int_0^\infty e^{-\frac{B_0 r^2}{2} (1+2\varepsilon)} \, r^{2k-1}\, \mathrm{d}r} \\
&& = \frac{\omega(x_3)\, B_0}{2\varepsilon} \, \frac{(1+2\varepsilon)^{-k}}{\Gamma(k)} \int_0^\infty e^{-s}\, s^{k-1}\, \mathrm{d}s  = \frac{\omega(x_3)\,  B_0}{2\varepsilon}\,  (1+2\varepsilon)^{-k}
\end{eqnarray*}
for any $k\geq 1$. Summing up gives
$$
\sum_{k=0}^\infty \, \Lambda_k^\gamma(x_3) \, \leq \,  \left(\frac{\omega(x_3)\, B_0}{2\, \varepsilon}\right)^\gamma \left(1+ \sum_{k=1}^\infty \left( (1+2\varepsilon)^{-k} +\frac{1}{2\sqrt{2\pi k}}\right)^\gamma \right)\,.
$$
Theorem \ref{thm-red} applied with $\gamma=\sigma +\frac 12$ then completes the proof.
\end{proof}

\smallskip
\begin{proof}[\bf Proof of Theorem \ref{thm-hole}]
In this case we have
$$
\alpha(x_3) = \omega(x_3)\, \frac{B_0 R^2}{2}\,.
$$
Inequality  \eqref{upperb-0} thus implies
$$
\Lambda_0(z) \, \leq \, \omega(z)\, \frac{B_0^2\, R^2}{2}\, .
$$
For $k\geq 1$ we note that in view of \eqref{g-hole}
$$
\int_r^\infty g(s)\, s\, \mathrm{d}s =
\left\{
\begin{array}{l@{\quad}l}
\frac 12\, (R^2-r^2)  & \quad r \leq R \, \\
& \\
0   &\quad r > R  \\
\end{array}
\right.
$$
Hence from \eqref{psi-k} and \eqref{lam-k-2} we conclude that
\begin{eqnarray*}
\lefteqn{\lambda_k(z)\, \leq\,  \frac{B_0^2\, R^2\, \omega(x_3)}{\Gamma(k)} \, \left(\frac{B_0}{2}\right)^{k} \int_0^R e^{-\frac{B_0 r^2}{2}} \, r^{2k-1}\, \mathrm{d}r}  \\
&&\, \leq\,  \frac{B_0^2\, R^2\, \omega(x_3)}{2 \Gamma(k)}  \int_0^{\frac{B_0 R^2}{2}} e^{-s} \, s^{k-1}\, \mathrm{d}s \\
&&\, \leq\,  \frac{B_0^2\, R^2\, \omega(z)}{2 k \Gamma(k)} \,  \left(\frac{B_0\, R^2}{2}\right)^k  = \frac{B_0\, \omega(x_3)}{\Gamma(k+1)} \,  \left(\frac{B_0\, R^2}{2}\right)^{k+1}\!\!, \quad k\in\mathbb{N}\,.
\end{eqnarray*}
This in combination with the above estimate on $\Lambda_0(x_3)$ and Lemma \ref{lem-aux} implies
$$
\sum_{k=0}^\infty \, \Lambda_k^\gamma(x_3) \ \leq \  \omega(x_3)^\gamma\, B_0^\gamma\, \left(\frac{B_0\, R^2}{2}\right)^\gamma \left(1+ \sum_{k=1}^\infty \left(\left(\frac{B_0\, R^2}{2}\right)^{k} \frac{1}{k!} + \frac{1}{\sqrt{2\pi k}}\right)^\gamma\, \right) \,,
$$
and the claim follows again by applying Theorem \ref{thm-red} with $\gamma=\sigma +\frac 12$.
\end{proof}

\subsection*{Acknowledgements}

The research was supported by the Czech Science Foundation (GA\v{C}R) within the project 14-06818S. D.B. acknowledges the support of the University of Ostrava and the project ``Support of Research in the Moravian-Silesian Region 2013''. H.~K. was supported by the Gruppo Nazionale per Analisi Matematica, la Probabilit\`a e le loro Applicazioni (GNAMPA) of the Istituto Nazionale di Alta Matematica (INdAM).
The support of MIUR-PRIN2010-11 grant for the project  ``Calcolo delle variazioni'' (H.~K.) is also gratefully acknowledged. T.W. was in part supported by the DFG project WE 1964/4-1 and the DFG GRK 1838.



\end{document}